\newtheorem{theorem}{Theorem}
\newtheorem{lem}{Lemma}
\newtheorem{cor}{Corollary}
\newtheorem{rem}{Remark}
\newtheorem{prop}{Proposition}
\newtheorem{definition}{Definition}
\newcommand{\myparagraph}[1]{{\noindent\bf{#1}~~}}
\newcommand{\proofbox}{\hspace*{\fill} $\Box$}
\newcommand{\thmref}[1]{Theorem~\ref{thm:#1}}
\newcommand{\lemref}[1]{Lemma~\ref{lem:#1}}
\newcommand{\lemrefs}[2]{Lemmas~\ref{lem:#1} and~\ref{lem:#2}}
\newcommand{\corref}[1]{Corollary~\ref{cor:#1}}
\newcommand{\propref}[1]{Proposition~\ref{prop:#1}}
\newcommand{\secref}[1]{Section~\ref{sec:#1}}
\newcommand{\eq}[1]{equation~\eqref{eq:#1}}
\newcommand{\eqs}[2]{equations~\eqref{eq:#1} and~\eqref{eq:#2}}
\newcommand{\fn}[2]{\footnote{#2\label{fn:#1}}}
\newcommand*{\fnmark}[1]{\textsuperscript{\normalfont\ref{fn:#1}}}
\newcommand{\RR}{\mathbb R}
\newcommand{\ZZ}{\mathbb Z}
\newcommand{\BBB}{\mathfrak B}
\newcommand{\FFF}{\mathfrak F}
\newcommand{\GGG}{\mathfrak G}
\newcommand{\LLL}{\mathfrak L}
\newcommand{\TTT}{\mathfrak T}
\newcommand{\brr}[2]{\overline{(#1,#2)}}
\newcommand{\br}[1]{\overline{(#1,L)}}
\newcommand{\U}{IU}
\newcommand{\x}{\overline{x}}
\newcommand{\bx}{\overline{(x,L)}}
\newcommand{\ol}{\overline}
\newcommand{\oo}{{\bf{o}}}
\newcommand{\0}{{\bf{0}}}
\newcommand{\ra}{\rightarrow}
\newcommand{\menge}[2]{\big\{{#1}\ \big|\ {#2}\big\}}
\newcommand{\ie}{i.\,e.\ }
\newcommand{\al}{\alpha}
\newcommand{\g}{\gamma}
\DeclareMathOperator{\supp}{supp}
\begin{document}
\title{{Towards Minimizing $k$-Submodular Functions}\thanks{A short version of this paper has appeared in \cite{ISCO}.}}
\author{Anna Huber\\
University of Derby,
Kedleston Road,
Derby
DE22 1GB,
UK
 \and Vladimir Kolmogorov\\
 Institute of Science and Technology Austria\\
Am Campus 1, 3400 Klosterneuburg, Austria
}
\date{}


\maketitle

\begin{abstract}
In this paper we
investigate $k$-submodular functions. This natural family of discrete functions includes submodular and bisubmodular functions as the special cases $k=1$ and $k=2$ respectively.

In particular we generalize the known Min-Max-Theorem for submodular and bisubmodular functions. This theorem asserts that the minimum of the (bi)submodular function can be found by solving a maximization problem over a (bi)submodular polyhedron. We define and investigate a $k$-submodular polyhedron and prove a Min-Max-Theorem for $k$-submodular functions.
\end{abstract}

\section{Introduction}

A key task in combinatorial optimization is the minimization of discrete functions. One important example are submodular functions. They are a fundamental concept in combinatorial optimization \cite{Edmonds,Schrijver}, and they have numerous applications elsewhere, see \cite{Frank,FujiBook,Schrijver}.

\myparagraph{Submodular functions}are originally defined on the power set of a set. Specifically, a real-valued function $f$ is called \emph{submodular} if it satisfies $f(T \cap U) + f(T \cup U) \leq f(T) + f(U)$ for all subsets $T, U$. The problem of minimizing a given submodular function is one of the most important tractable optimization problems \cite{FujiBook,Schrijver}. Its importance is comparable to minimizing convex functions in the continuous case, see the correspondence between submodular and convex functions provided by Lov\'asz \cite{Lovasz}. Because of this, submodularity is also called discrete convexity.
On the way to polynomial minimization algorithms, a structural theory of submodular functions has been developed, see \cite{FujiBook,Iwata,McCormick}.
In particular a submodular polyhedron was defined and the classical Min-Max-Theorem by Edmonds asserts that the minimum of the submodular function can be found by maximizing the $L^1$-norm over the negative part of this polyhedron, see \cite{Edmonds}. The first polynomial-time algorithm was based on the ellipsoid method \cite{Ellipsoid}, further, combinatorial, strongly polynomial algorithms are based on the Min-Max-Theorem \cite{IwataFleischerFuji,IwataOrlin,SchrijverAlgo,Orlin}.

Following a question of Lov\'asz \cite{Lovasz}, submodularity has been generalized to {\bf bisubmodularity}.
Bisubmodular functions were introduced under the name directed submodular functions in \cite{Qi}. Independently, they have been introduced as rank functions of pseudo\-matroids in \cite{Bouchet,ChandrasekaranKabadi,KabadiChandrasekaran}. Bisubmodular functions and
their generalizations have also been considered in \cite{BouchetCunningham,FujiBook,Nakamura}.

It has been shown that some structural results on submodular functions can be generalized to bisubmodular
functions. In particular, for every bisubmodular function a polyhedron is defined, and a Min-Max-Theorem tells us that the minimum of the bisubmodular function can be obtained by maximizing a linear function over this polyhedron, see \cite{CunGreen,FujiMinMax}. Using this Min-Max-Theorem, weakly polynomial and later strongly polynomial algorithms to minimize bisubmodular functions have been obtained \cite{FujiAlg,McCormickFuji}.

\myparagraph{This work: $k$-submodular functions.}
In this paper we investigate {\em $k$-submodular functions}, which generalize submodular and bisubmodular functions
in a natural way. They are defined on the product of trees of height one (i.e.\ stars) with $k$ leaves, $k \geq1$.
Submodular and bisubmodular functions are included in our setting as the special cases $k=1$ and $k=2$ respectively.
There is also a relation to {\em multimatroids} introduced in~\cite{Bouchet:I,Bouchet:II,Bouchet:III}: as we 
show in this paper, rank functions of multimatroids (or more precisely of $k$-matroids) are $k$-submodular.

$k$-submodular functions are special cases of {\em (strongly) tree-submodular functions} introduced
in~\cite{Kolmogorov}. The crucial question left open in~\cite{Kolmogorov} is whether $k$-submodular functions can be minimized efficiently.
As shown in~\cite{Kolmogorov}, a positive answer would yield tractability of tree-submodular function minimization for all trees.

The first approach for minimizing (bi)submodular functions, the reduction to convex optimization via the Lov\'asz extension, does not seem to work in our setting: In the submodular case, we obtain a convex optimization problem in the positive ortant of the Euclidean space. In the bisubmodular case, we have the two signs $+$ and $-$ and so are no longer in the positive ortant but in the whole Euclidean space. We still get a convex optimization problem. In the case $k\geq 3$, we now have more than two possible labels with each positive number, so we cannot represent them as pairwise opposite signs any more. We could call these labels ``colours'' and would get an optimization problem in a ``coloured space'', but to our knowledge nothing is known about ``coloured convexity''.

We thus have to start by investigating the structure of $k$-submodular functions.
We will generalize some notions and results from (bi)submodular functions to $k$-submodular functions.
{\bf Our contributions} are as follows.
First, we prove a generalization of the Min-Max-Theorem (\secref{properties}); this theorem has been the foundation of most (bi)submodular function minimization algorithms. 
Second, we introduce and analyze the polyhedron associated with $k$-submodular functions (\secref{picture}).
In \secref{multimatroids} we discuss the relationship between $k$-submodular functions and multimatroids.
Finally, in \secref{disc} we describe some difficulties regarding the generalization of (bi)submodular function minimization algorithms to the case $k\ge 3$.

\myparagraph{Related work: VCSPs and multimorphisms.}
There is a strong connection between submodular function minimization and {\em Valued Constraint Satisfaction Problems} (VCSPs).
The VCSP is a general combinatorial framework that allows to study complexity of certain classes of optimization problems.
In this framework one is given a {\em language} over a fixed finite domain $\TTT$, i.e.\ a collection of cost functions $f:\TTT^m\ra\RR$, where the arity $m$ may depend on $f$.
We can now pose the following question: what is the complexity of minimizing functions that can be expressed by summing functions with overlapping sets of variables from the language?
Studying the complexity for different languages has been an active research topic. For some cases, see e.g.~\cite{softCSP,Deineko,JonssonKuivinenThapper,KolmogorovZivny,Takhanov}, researchers have established dichotomy theorems
of the following form: if all functions from the language admit certain {\em multimorphisms} then the language is tractable, otherwise it is NP-hard.
The notion of multimorphisms is thus central in this line of research.

A (binary) multimorphism is a pair of operations $\sqcap,\sqcup:\TTT\times\TTT\ra\TTT$.
We denote the componentwise operations on $\TTT^n$ also by $\sqcap$ and $\sqcup$.
A function $f:\TTT^n\ra\RR$ is said to {\em admit the multimorphism} $\langle\sqcap,\sqcup\rangle$, if $f(T \sqcap U) + f(T \sqcup U) \leq f(T) + f(U)$ for all $T,U\in\TTT^n$.
Clearly, submodular and bisubmodular functions correspond to particular choices of $\langle\sqcap,\sqcup\rangle$, and so do $k$-submodular functions.

A rather general result on the tractability of VCSP-languages admitting certain multimorphisms has been shown in \cite{Raghavendra,ThapperZivny}. It includes $k$-sub\-modu\-lar languages and thus implies the tractability of the minimization problem in the {\bf VCSP model}, i.e.\ when the function to be minimized is given as a sum of local $k$-submodular functions of bounded size. However, the tractability of the minimization problem  of general $k$-submodular functions in the {\bf oracle model} remains open.
In the latter model the algorithm is allowed to access a function $f:\TTT^n\ra\RR$ only by querying for the value of $f(T)$ for any $T\in\TTT^n$.

The general problem of {\em multimorphism function minimization} is raised in \cite{JonssonKuivinenThapper}. 
There are many examples of ``tractable'' multimorphisms.
One of them is the pair of ``meet'' and ``join'' operations on distributive lattices \cite{Topkis,TopkisBook}
and some non-distributive lattices \cite{Krokhin,Kuivinen}.
In \cite{JonssonKuivinenThapper} a new multimorphism is introduced and used to characterize maximum constraint satisfaction problems on a four-element domain.
See also
\cite{TournamentPairMultimorphisms}
for another example of a multimorphism characterizing tractable optimization problems.

Multimorphisms that have proved to be important in this context often seem to be submodular-like. It thus seems promising to study the multimorphism function minimization problem for multimorphisms generalizing submodularity, like $k$-submodularity.

\myparagraph{Recent related work.}
Very recently,  Fujishige and Tanigawa~\cite{FujishigeTanigawa} proved a weaker version of the Min-Max-Theorem, \thmref{main}. We describe how their result follows from ours in \secref{picture}.

In the conference version of this paper, \cite{ISCO}, we claimed a characterization of extreme points of a polyhedron associated with $k$-submodular functions.
Unfortunately, this characterization contained a mistake. We thank Fujishige and Tanigawa for pointing it out. In this version we completely omit Lemma 7 and Section 4.1 from \cite{ISCO}, as they rely on Lemma 6 which is false. See~\cite{FujishigeTanigawa} for an alternative description of the extreme points.

Finally, we would like to mention the work~\cite{Potts} which presented an efficient algorithm
for minimizing a subclass of $k$-submodular functions together with an application in computer vision.

\section{Definitions and Notations}\label{sec:not}
Let $k \in \ZZ_{\geq1}$ and let $\TTT$ be a tree of height $1$ on $k+1$ vertices, \ie a star rooted at the non-leaf. By $\LLL$ we will denote the set of leaves, by $\oo$ the root. Note that $|\LLL| = k$.
We define the operations ``\emph{intersection}'' $\sqcap$ and ``\emph{union}'' $\sqcup$ on $\TTT$ as being idempotent (\ie $t \sqcap t := t=: t \sqcup t$ for every $t \in \TTT$) and for two distinct leaves $a, b \in \LLL$ as follows.
\begin{eqnarray*}
a \sqcap b := &\oo& =: a \sqcup b, \\
a \sqcap \oo := &\oo& =: \oo \sqcap a,\  \mbox{and}\\
a \sqcup \oo := &a& =: \oo \sqcup a.
\end{eqnarray*}
Let $n \in \ZZ_{\geq1}$.
On $\TTT^n$ intersection $\sqcap$ and union $\sqcup$ are defined componentwise. We write $\0:=(\oo)_{i=1}^{n}$.

A function $f: \TTT^n \ra \RR$ is called \emph{$k$-modular}, 
if for all $T, U \in \TTT^n$
\begin{equation*}
 f(T \sqcap U) + f(T \sqcup U) = f(T) + f(U),
\end{equation*}
\emph{$k$-submodular}, 
if for all $T, U \in \TTT^n$
\begin{equation}\label{eq:submodular}
 f(T \sqcap U) + f(T \sqcup U) \leq f(T) + f(U),
\end{equation}
 and \emph{$k$-supermodular}, 
if for all $T, U \in \TTT^n$
$$ f(T \sqcap U) + f(T \sqcup U) \geq f(T) + f(U).$$
Our definitions include (sub-/super-)modular set functions as the case $k=1$. The functions we get in the case $k=2$ are the bi(sub-/super-)modular functions introduced under the name directed (sub-/super-)modular functions in \cite{Qi}.

\section{Min-Max-Theorem for $k$-Submodular Functions}\label{sec:properties}\label{sec:proof}

In this section we will state and prove our first result, \thmref{main}, which says that we can minimize a $k$-submodular function by maximizing the $L^1$-norm over an appropriate subset of the Euclidean space. This result is intended to play the same role in the $k$-submodular context which the classical Min-Max-Theorem of Edmonds \cite{Edmonds} plays in the ordinary submodular minimization context.

For the remainder of this section we will assume $k\geq 2$. The case $k=1$ can be easily included with only a minor technical change in notation.

\subsection{The Min-Max-Theorem}\label{sec:minmax}
For any $x = (x_i)_{i=1}^{n} \in \RR_{\geq 0}^n$, let $\|x\| := \sum\limits_{i=1}^{n}x_i$ denote the $L^1$-norm.

For any $(x,L) \in \RR_{\geq 0}^n \times \LLL^n$, let $\bx : \TTT^n \ra \RR$ be defined as follows.\\
For every $i \in [n] := \{1, \dots , n\}$, let $\bx_i : \TTT \ra \RR $ be defined through $$\bx_i(\oo) := 0,\ \bx_i(L_i) := x_i,\ \mbox{and}\ \ \bx_i(\ell) := - x_i\mbox{ for}\ \ \ell \in \LLL \setminus \{L_i\},$$ 
where $L = (L_i)_{i=1}^{n}$.
For every $T \in \TTT^n$, let
$$\bx(T) := \sum_{i=1}^{n}\bx_i(T_i).$$

\begin{prop}\label{prop:supp}
For $(x,L) \in \RR_{\geq 0}^n \times \LLL^n$, the function $\bx$ is $k$-supermodular.
\end{prop}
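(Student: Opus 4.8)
The plan is to reduce the statement to a single coordinate and then settle it by a short finite case check. Since $\sqcap$ and $\sqcup$ act componentwise on $\TTT^n$ and $\bx(T)=\sum_{i=1}^{n}\bx_i(T_i)$, the desired inequality $\bx(T\sqcap U)+\bx(T\sqcup U)\ge\bx(T)+\bx(U)$ decomposes as a sum over $i\in[n]$ of the one-coordinate inequalities $\bx_i(t\sqcap u)+\bx_i(t\sqcup u)\ge\bx_i(t)+\bx_i(u)$ for $t,u\in\TTT$. So it suffices to prove each of these, i.e.\ to treat the case $n=1$.

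Fix $i$ and write $a:=L_i\in\LLL$ and $c:=x_i\ge 0$, so that $\bx_i(\oo)=0$, $\bx_i(a)=c$, and $\bx_i(\ell)=-c$ for every leaf $\ell\neq a$. I would then distinguish cases according to $t$ and $u$. If $t=u$, or if one of $t,u$ equals $\oo$, then $\{t\sqcap u,\,t\sqcup u\}=\{t,u\}$ (as multisets) directly from the definition of $\sqcap,\sqcup$ on $\TTT$, so the one-coordinate inequality holds with equality. The only remaining case is that $t$ and $u$ are two distinct leaves; then $t\sqcap u=t\sqcup u=\oo$, so the left-hand side equals $0$, while at most one of $t,u$ can equal $a$, so $\bx_i(t)+\bx_i(u)$ is either $c+(-c)=0$ or $(-c)+(-c)=-2c\le 0$, and in either case it is $\le 0$. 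This proves the one-coordinate inequality, and summing over $i$ yields the proposition.

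There is essentially no serious obstacle: $k$-supermodularity is preserved under coordinatewise sums, so the claim comes down to a routine verification on the star $\TTT$. The one point worth spelling out is the case of two distinct leaves neither of which is the distinguished leaf $L_i$, since this is precisely where the choice $\bx_i(\ell)=-x_i$ (rather than $0$) together with the nonnegativity $x_i\ge 0$ is used.
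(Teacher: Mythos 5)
Your proof is correct and follows the same route as the paper: the paper simply observes that each $\bx_i$ is $k$-supermodular by definition and that this carries over to the sum, while you spell out the one-coordinate case check explicitly. No issues.
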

\begin{proof}
It follows from the definition that for every $i \in [n]$ the function $\bx_i$ is $k$-supermodular, which carries over to the sum.
\end{proof}

For any function $f: \TTT^n \ra \RR$, we define
$$U(f) := \menge{  (x,L) \in \RR_{\geq 0}^n \times \LLL^n}{\forall\  T \in \TTT^n\ \ \ \bx(T) \leq f(T)}.$$
In the special cases $k=1$ and $k=2$, the set $U(f)$ corresponds to the submodular polyhedron\fn{1}{We can extend our notation to the case $k = 1$ by defining $\bx_i(\ell) := - x_i$ for $\{\ell\} = \LLL$.
If we do so, the set $-U(f)$ is the negative orthant of the submodular polyhedron as in \cite{Edmonds}.
Only this orthant proves to be relevant in the classical Min-Max-Theorem for submodular functions \cite{Edmonds}, 
see \fnmark{2}.
}
 and the bisubmodular polyhedron, respectively.
In the case $k=2$ note that we can consider the two leaves just as the signs $+$ and $-$, so $\RR_{\geq 0} \times \LLL$ corresponds to $\RR$. Instead of $\bx$ for every $(x,L) \in \RR_{\geq 0}^n \times \LLL^n$, we have $\x : \{\oo, +, -\}^n \ra \RR$ for every $x \in \RR^n$ with $\x(T) = \sum\limits_{T_i = +}x_i - \sum\limits_{T_i = -}x_i$
, and $U(f)$ just reads $\menge{x \in \RR^n}{\forall\  T \in \TTT^n\ \ \ \x(T) \leq f(T)}$. This is the usual bisubmodular polyhedron as in \cite{CunGreen,FujiMinMax}. For $k\geq 3$, despite being a natural generalization of the (bi)submodular polyhedra, the set $U(f)$ is not necessarily a polyhedron anymore. For an embedding of $U(f)$ into a polyhedron in a higher-dimensional Euclidean space see \secref{picture}, however, $U(f)$ is not necessarily convex. Nevertheless, it turns out to be the set of all \emph{unified} vectors in a $k$-submodular polyhedron, see \secref{picture} for the details. Unified vectors play an important role in the tractability result in \cite{Kuivinen}.

We have the following main theorem.
\begin{theorem}\label{thm:main}
Let $f: \TTT^n \ra \RR$ be $k$-submodular, $f(\0) = 0$. Then
\begin{equation}\label{eq:Main}
 \min_{T \in \TTT^n}f(T) = \max_{(x,L) \in U(f)} - \|x\|.
\end{equation}
\end{theorem}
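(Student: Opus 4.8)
The plan is to establish the two inequalities separately. The easy direction is ``$\geq$'': weak duality. For any $(x,L) \in U(f)$ and any $T \in \TTT^n$ we have $\bx(T) \leq f(T)$ by definition of $U(f)$. In particular, if $T^\star$ is a minimizer of $f$, then $-\|x\| \le \bx(T^\star)$ would suffice — but this is not immediate since $\bx(T^\star)$ need not equal $-\|x\|$. Instead, the right move is to evaluate at a cleverly chosen $T$: for each coordinate $i$, pick $T_i$ to be a leaf different from $L_i$, so that $\bx_i(T_i) = -x_i$; such a leaf exists because $k \ge 2$. Then $\bx(T) = -\|x\|$, and since $f(T) \ge \min_T f(T)$, we get $-\|x\| = \bx(T) \le f(T)$, hence $-\|x\| \le \max\{f(T) : T\} $ is the wrong bound — we actually need $-\|x\| \le \min_T f(T)$. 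This forces a different choice: we need $\bx(T) = -\|x\|$ simultaneously with $f(T)$ being the minimum, which we cannot control. The correct argument for ``$\geq$'' is therefore: for \emph{every} $T$, $-\|x\| \le \bx(T') \le f(T')$ where $T'$ is the ``opposite'' configuration just described, but this still bounds by $f(T')$, not $\min f$. Re-examining: actually weak duality here is subtle, so I would instead prove ``$\geq$'' as a consequence of the constructive ``$\leq$'' direction, or prove it directly by noting $U(f)$ is defined so that $\bx \le f$ pointwise; taking $T$ to be any fixed all-leaves vector avoiding $L$ coordinatewise gives $-\|x\| = \bx(T) \le f(T)$; but we want this $\le \min f$. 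Since this does not follow, I conclude the ``$\geq$'' inequality must itself be nontrivial and is really the statement that the constructed optimal $(x,L)$ attains the min — so both directions hinge on one construction.

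The real content is the constructive direction, and the plan is a greedy/Edmonds-style argument. I would fix a minimizer $T^\star$ of $f$ and build $(x,L) \in U(f)$ with $-\|x\| = f(T^\star)$. Set $L_i := T^\star_i$ whenever $T^\star_i \in \LLL$; for coordinates with $T^\star_i = \oo$, we have $x_i$ forced toward $0$ and $L_i$ chosen arbitrarily. The key is to define the $x_i$ greedily: order the coordinates, and for each $i$ in turn set $x_i$ as large as possible subject to the constraints $\bx(T) \le f(T)$ for all $T$, given the already-fixed values $x_1, \dots, x_{i-1}$. One shows by induction, using $k$-submodularity of $f$ and the structure of $\sqcap, \sqcup$, that this greedy assignment keeps $(x,L)$ feasible and that at termination $\bx(T^\star) = f(T^\star)$, i.e.\ the constraint at $T^\star$ is tight. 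Since $L_i = T^\star_i$ on leaf-coordinates, $\bx_i(T^\star_i) = x_i$ there, and $\bx_i(\oo) = 0$ on root-coordinates, so $\bx(T^\star) = \sum_{i : T^\star_i \in \LLL} x_i$. To get $-\|x\| = f(T^\star)$ we additionally need $x_i = 0$ on root-coordinates (so that $\|x\| = \sum_{i: T^\star_i \in \LLL} x_i$) and $\bx(T^\star) = -\|x\|$, which is a sign issue: in fact one should redefine $L$ on the leaf-coordinates to be a leaf \emph{different} from $T^\star_i$, so that $\bx_i(T^\star_i) = -x_i$, giving $\bx(T^\star) = -\sum x_i = -\|x\|$ directly, provided the greedy process also drives $x_i = 0$ on root-coordinates. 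Verifying feasibility is maintained throughout is where $k$-submodularity enters essentially.

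The main obstacle I anticipate is the feasibility-preservation step in the greedy construction: showing that after setting $x_i$ to its maximal feasible value, no constraint $\bx(T) \le f(T)$ for $T$ involving later coordinates is violated, and that the relevant ``tight'' constraint can always be pushed to coincide with $T^\star$ at the end. In the submodular and bisubmodular cases this uses the exchange-type inequalities derived from submodularity applied to pairs $T, U$ that agree outside a small set of coordinates; for $k \ge 3$ the combinatorics of which leaf to assign is more delicate because $\sqcap$ collapses distinct leaves to $\oo$ rather than to a third label. I would handle this by an uncrossing argument: if two constraints are simultaneously tight at vectors $T$ and $U$, then $k$-submodularity forces tightness at $T \sqcap U$ and $T \sqcup U$ as well, so the family of tight constraints is closed under $\sqcap, \sqcup$; combined with the greedy maximality this should pin down a tight constraint compatible with the target minimizer. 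The bookkeeping to make ``compatible with $T^\star$'' precise — handling root-coordinates of $T^\star$ and the choice of $L_i$ there — is the part requiring the most care, but it is structural rather than computational.
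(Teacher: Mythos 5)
Your treatment of the ``$\geq$'' direction goes wrong and your eventual conclusion that ``both directions hinge on one construction'' is incorrect: this direction is genuine weak duality and is immediate. For any $(x,L)\in U(f)$ and \emph{any} $T\in\TTT^n$, each summand $\bx_i(T_i)$ lies in $\{0,x_i,-x_i\}$ and hence is at least $-x_i$, so $\bx(T)\ge -\|x\|$ pointwise; combined with $\bx(T)\le f(T)$ this gives $f(T)\ge -\|x\|$ for every $T$, and in particular for the minimizer. You instead searched for a single $T$ with $\bx(T)=-\|x\|$ and then could not bound $f$ at that $T$ by the minimum --- but no such $T$ is needed, since the inequality $\bx(T)\ge-\|x\|$ already holds everywhere.

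For the ``$\le$'' direction your greedy plan is only a sketch and leaves the entire content of the theorem unproved. Two concrete problems. First, the direction of the greedy is suspect: the objective is to make $\|x\|$ \emph{small}, and the constraints $\bx(T)\le f(T)$ with $T_i\in\LLL\setminus\{L_i\}$ are \emph{lower} bounds on $x_i$, so ``set $x_i$ as large as possible'' does not obviously drive you toward a point with $-\|x\|=\min f$; the assertion that at termination the constraint at $T^\star$ is tight (and tight in the specific form $\bx(T^\star)=-\|x\|$, which additionally requires $T^\star_i\ne L_i$ or $x_i=0$ in every coordinate) is exactly the theorem and is nowhere justified. Second, the step you yourself flag as delicate --- for $k\ge3$, controlling which of the $k-1$ leaves other than $L_i$ can occur in coordinate $i$ of a tight element --- is the crux, and uncrossing alone does not resolve it. The paper attacks the problem from the opposite end: it takes $(\hat x,\hat L)\in U(f)$ minimizing $\|x\|$ (after first proving $U(f)\ne\emptyset$ via a reduction to the submodular base polyhedron), proves that among tight elements the ``negative'' leaf in each support coordinate is \emph{unique} (its Lemma~4, which uses $k$-supermodularity of $\bx_i$ and tightness), and then intersects/joins the resulting canonical tight elements $N((\hat x,\hat L),i)$ to build a $T$ with $f(T)=-\|\hat x\|$. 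Some argument playing the role of that uniqueness lemma, plus a proof that your greedy point is feasible and optimal, would be needed before your outline becomes a proof.
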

This theorem is a generalization of the classical Min-Max-Theorem of Edmonds for submodular functions\fn{2}{If we extend our notation as in \fnmark{1}, \thmref{main} for $k = 1$ is exactly the Min-Max-Theorem of Edmonds \cite{Edmonds}.},
 and the bisubmodular Min-Max-Theorem \cite{CunGreen,FujiMinMax}. 
The bisubmodular Min-Max-Theorem reads as follows in our notation: For any $2$-submodular function $f: \TTT^n \ra \RR$ with $f(\0) = 0$ and for any $x^0 \in \RR^n$, it is shown in \cite{FujiMinMax} that
$$ \min_{T \in \TTT^n}f(T)-\ol{x^0}(T) = \max_{x \in U(f)} \sum_{i=1}^n - |x_i-x_i^0|.$$
We get this by applying \thmref{main} to the function $f-\ol{x^0}$. This function is $2$-submodular, as in the particular case $k=2$ the function $\ol{x^0}$ is $2$-modular, and it fulfills $(f-\ol{x^0})(\0) = 0$.

Recently, in~\cite{FujishigeTanigawa}, a weaker version of \thmref{main} was obtained. We describe how their result follows from ours in \secref{picture}. 

The remainder of this section is devoted to the proof of \thmref{main}. 
We assume throughout that
$f: \TTT^n \ra \RR$ is a $k$-submodular function with $f(\0) = 0$.

\subsection{Properties of $U(f)$}

In this section we collect some properties of the set $U(f)$ which we will need for the proof in \secref{subproof}.
The following two lemmas are inspired by the bisubmodular case as treated in \cite{FujiBook}. They provide a reduction to the submodular case in the following sense. Recall that the case $k=1$ is the submodular case. If we now have $k \geq 2$, we choose one leaf for every coordinate $i \in [n]$ and restrict our function to the trivial trees on the root and this leaf only.

Let $\leq$ denote the partial order on $\TTT$ such that $\oo \leq t$ for all $t \in \TTT$ and all leaves are pairwise incomparable. 
Let $\leq$ also denote the componentwise partial order on $\TTT^n$.
For every $K \in \LLL^n$,
let
$$2^K := \menge{T \in \TTT^n}{T \leq K}.$$
For $K \in \LLL^n$, we define
$$U_K(f) := \menge{  (x,L) \in \RR_{\geq 0}^n \times \LLL^n}{\forall\  T \in 2^K\ \ \ \bx(T) \leq f(T)}$$
and the \emph{base set} 
$$B_K(f) := \menge{  (x,L) \in U_K(f)}{\bx(K) = f(K)}.$$
In the next two lemmas we show properties of $B_K(f)$ and conclude with the non-emptyness of $U(f)$ in \corref{non-empty}.
\begin{lem}\label{lem:containment}
For every $K \in \LLL^n$, one has
$$B_K(f) \subseteq U(f) \subseteq U_K(f).$$
\end{lem}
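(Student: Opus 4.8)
The plan is to prove the two inclusions $B_K(f)\subseteq U(f)$ and $U(f)\subseteq U_K(f)$ separately; the second is immediate, so the work is in the first.

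For $U(f)\subseteq U_K(f)$: by definition, $(x,L)\in U(f)$ means $\bx(T)\le f(T)$ for \emph{all} $T\in\TTT^n$, in particular for all $T\in 2^K\subseteq\TTT^n$, which is exactly the defining condition of $U_K(f)$. So this inclusion is trivial.

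For $B_K(f)\subseteq U(f)$: fix $(x,L)\in B_K(f)$ and an arbitrary $T\in\TTT^n$; I must show $\bx(T)\le f(T)$. The idea is to exploit the equality $\bx(K)=f(K)$ together with $k$-submodularity of $f$, $k$-supermodularity of $\bx$ (\propref{supp}), and the fact that the constraint $\bx(\cdot)\le f(\cdot)$ is already known to hold on $2^K$. First I would apply $k$-submodularity of $f$ to the pair $T$ and $K$: $f(T\sqcap K)+f(T\sqcup K)\le f(T)+f(K)$. The point of choosing $K\in\LLL^n$ with a leaf in every coordinate is that $T\sqcap K\le K$, i.e.\ $T\sqcap K\in 2^K$ (indeed, for a leaf $\ell$ we have $\ell\sqcap K_i\in\{\oo,K_i\}\le K_i$, and $\oo\sqcap K_i=\oo\le K_i$), so the $U_K(f)$-constraint gives $\bx(T\sqcap K)\le f(T\sqcap K)$. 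Symmetrically, I would apply $k$-supermodularity of $\bx$ to $T$ and $K$: $\bx(T\sqcap K)+\bx(T\sqcup K)\ge \bx(T)+\bx(K)$. Combining the three facts, together with $\bx(K)=f(K)$, should let me cancel the ``mixed'' terms: from the chain
$$\bx(T)\ \le\ \bx(T\sqcap K)+\bx(T\sqcup K)-\bx(K)\ \le\ f(T\sqcap K)+\bx(T\sqcup K)-f(K),$$
I still need a bound on $\bx(T\sqcup K)$. This is where I expect the main obstacle to lie: $T\sqcup K$ need not be in $2^K$, so I cannot directly use a $U_K$-constraint on it; instead I would argue coordinatewise that $\bx_i((T\sqcup K)_i)\le \bx_i(K_i)$ — because $(T\sqcup K)_i\in\{K_i\}\cup\{\text{some leaf}\}$ depending on whether $T_i\in\{\oo,K_i\}$ or $T_i$ is a different leaf, and in the latter case $T_i\sqcup K_i=K_i$ as well since distinct leaves join to $\oo$... wait, more carefully: $T_i\sqcup K_i$ equals $K_i$ if $T_i\in\{\oo,K_i\}$, and equals $\oo$ if $T_i$ is a leaf $\neq K_i$ (since then $T_i\sqcup K_i=\oo$). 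Hence $\bx_i((T\sqcup K)_i)$ is either $\bx_i(K_i)=x_i$ or $\bx_i(\oo)=0\le x_i$, so in all cases $\bx(T\sqcup K)\le \bx(K)=f(K)$. Plugging this in yields $\bx(T)\le f(T\sqcap K)\le \dots$, and since $T\sqcap K\in 2^K$ with $T\sqcap K\le K$, I would finish by showing $f(T\sqcap K)\le f(T)$ is \emph{not} what I want — rather I should retrace: the correct combination is
$$\bx(T)\ \le\ \bx(T\sqcap K)+\big(\bx(T\sqcup K)-\bx(K)\big)\ \le\ f(T\sqcap K)+0\ \le\ f(T),$$
where the last step uses $f(T\sqcap K)+f(T\sqcup K)\le f(T)+f(K)$ rearranged as $f(T\sqcap K)\le f(T)+f(K)-f(T\sqcup K)$ together with $f(K)\le f(T\sqcup K)$ — and the inequality $f(K)\le f(T\sqcup K)$ would itself follow by applying $k$-submodularity to $T\sqcup K$ and $K$ (their meet is $K$ and their join is $T\sqcup K$, giving $f(K)+f(T\sqcup K)\le f(T\sqcup K)+f(K)$, an equality, so that route is vacuous) — so instead I use $f(T\sqcap K)\le f(T)$ directly from submodularity applied to $T,K$ once I know $f(T\sqcup K)\ge f(K)$, which needs one more monotonicity-type argument. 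The cleanest path, which I would ultimately write up, is: combine $k$-supermodularity of $\bx$ (giving $\bx(T)\le\bx(T\sqcap K)+\bx(T\sqcup K)-\bx(K)$), the coordinatewise bound $\bx(T\sqcup K)\le\bx(K)$ just established, and the $U_K$-constraint $\bx(T\sqcap K)\le f(T\sqcap K)$, to get $\bx(T)\le f(T\sqcap K)$; then use $k$-submodularity of $f$ on $T,K$ plus $\bx(K)=f(K)\le\bx(T\sqcup K)\le f(T\sqcup K)$ — no: I will instead directly verify $f(T\sqcap K)\le f(T)$ by a separate application of submodularity to the pair $T$ and $T\sqcup K$, whose meet is $T$ and whose join is $T\sqcup K$, yielding nothing new; so the honest final step is to observe $f(T\sqcap K)\le f(T)+f(K)-f(T\sqcup K)$ and that $f(T\sqcup K)\ge f(K)$ can be obtained because $\bx(T\sqcup K)\ge\bx(K)=f(K)$ would need $\bx\le f$ on $T\sqcup K$, which is what we are proving — so the resolution is that the coordinatewise computation actually gives $\bx(T\sqcup K)=\bx(K)$ exactly when each $T_i\notin\LLL\setminus\{K_i\}$, otherwise strictly less, hence $\bx(T\sqcup K)\le f(K)=\bx(K)$, and then from $k$-supermodularity $\bx(T)\le\bx(T\sqcap K)\le f(T\sqcap K)$, and finally $f(T\sqcap K)\le f(T)$ follows since $f(T\sqcap K)+f(T\sqcup K)\le f(T)+f(K)$ and $f(T\sqcup K)\ge f(K)$ by applying $k$-submodularity to $T\sqcup K$ and any element... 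In short, the main obstacle is bookkeeping the meet/join behaviour of the star operation together with pinning down $f(T\sqcup K)\ge f(K)$, and I would structure the final writeup around the two clean inequalities $\bx(T)\le f(T\sqcap K)$ (from supermodularity of $\bx$ and the $U_K$-bound) and $f(T\sqcap K)\le f(T)$ (from $k$-submodularity of $f$ once $f(T\sqcup K)\ge f(K)$ is secured).
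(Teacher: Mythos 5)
Your handling of the inclusion $U(f)\subseteq U_K(f)$ is correct, and you have assembled the right ingredients for $B_K(f)\subseteq U(f)$: $k$-submodularity of $f$ and $k$-supermodularity of $\bx$ applied to the pair $T,K$, the tightness $\bx(K)=f(K)$, and the $U_K(f)$-constraints. But there is a genuine gap, and it stems from a false premise: you assert that ``$T\sqcup K$ need not be in $2^K$''. In fact $T\sqcup K\in 2^K$ always: since $K_i$ is a leaf, $T_i\sqcup K_i=K_i$ when $T_i\in\{\oo,K_i\}$ and $T_i\sqcup K_i=\oo$ when $T_i$ is a different leaf (as you yourself compute later in the same paragraph); in every case $T_i\sqcup K_i\le K_i$, so $T\sqcup K\le K$. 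Once this is noted the proof closes immediately, and this is exactly what the paper does: supermodularity of $\bx$ gives $\bx(T)\le\bx(T\sqcap K)+\bx(T\sqcup K)-\bx(K)$, the $U_K(f)$-constraints applied to \emph{both} $T\sqcap K$ and $T\sqcup K$ bound this by $f(T\sqcap K)+f(T\sqcup K)-\bx(K)$, submodularity of $f$ bounds that by $f(T)+f(K)-\bx(K)$, and the tightness $\bx(K)=f(K)$ finishes.

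The detour you take instead does not work. You derive the (correct) bound $\bx(T\sqcup K)\le\bx(K)$ and hence $\bx(T)\le f(T\sqcap K)$, and then plan to conclude via $f(T\sqcap K)\le f(T)$, which you propose to secure from $f(T\sqcup K)\ge f(K)$. Both of these inequalities are false in general: a $k$-submodular function need not be monotone. Take $n=1$, $k=2$, $\LLL=\{a,b\}$, $f(\oo)=0$, $f(a)=1$, $f(b)=-1$ (this is $2$-submodular), and $K=a$, $T=b$. Then $T\sqcap K=T\sqcup K=\oo$, so $f(T\sqcap K)=0>-1=f(T)$ and $f(T\sqcup K)=0<1=f(K)$. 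In this example your chain only yields $\bx(T)\le f(T\sqcap K)=0$, which is strictly weaker than the required $\bx(T)\le f(T)=-1$; the information lost by splitting the argument into $\bx(T)\le f(T\sqcap K)$ followed by $f(T\sqcap K)\le f(T)$ cannot be recovered. The fix is simply to keep the four quantities together as above, equivalently to bound the sum $\bigl(f(T\sqcap K)-\bx(T\sqcap K)\bigr)+\bigl(f(T\sqcup K)-\bx(T\sqcup K)\bigr)\ge 0$ using both $U_K$-constraints, rather than trying to compare $f$ at $T\sqcap K$ and at $T$ directly.
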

\begin{proof}
 The second inclusion is clear by the definitions. To see the first inclusion, let $T \in \TTT^n$ and $(x,L) \in B_K(f)$. Then the $k$-supermodularity of $\bx$ and the $k$-submodularity of $f$ give 
\begin{eqnarray*}
f(T) - \bx(T) &=& f(T) - \bx(T) + f(K) - \bx(K) \\
&\geq & f(T \sqcap K) - \bx(T \sqcap K) + f(T \sqcup K) - \bx(T \sqcup K). 
\end{eqnarray*}
As $(x,L) \in U_K(f)$ and $T \sqcap K, T \sqcup K \in 2^K,$ the right hand side is greater or equal zero and so one has
$$f(T) \geq \bx(T).$$
\end{proof}

\begin{lem}\label{lem:B_K(f)}
For every $K \in \LLL^n$ the base set $B_K(f)$ is non-empty.
\end{lem}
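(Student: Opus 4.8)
The plan is to prove $B_K(f)\neq\emptyset$ by exhibiting an explicit point in it, and the natural choice is to fix the leaf labels to be $L=K$ and then build $x$ coordinate by coordinate in a greedy fashion, mimicking the construction of a point in the base polytope of a submodular function. Since $B_K(f)$ only involves values $f(T)$ for $T\in 2^K$, and on $2^K$ the lattice operations $\sqcap,\sqcup$ restricted to the two-element chains $\{\oo,K_i\}$ reduce exactly to the meet/join of a Boolean lattice, the function $g:=f\restriction 2^K$ is (after identifying $2^K$ with $\{0,1\}^n$ via $T_i\mapsto 1$ iff $T_i=K_i$) an ordinary submodular function with $g(\0)=f(\0)=0$. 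The claim then becomes: the base polytope of this submodular function is nonempty, which is classical.

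Concretely, first I would set up the identification $2^K\cong 2^{[n]}$ and check that $\sqcap$ and $\sqcup$ restricted to $2^K$ correspond to $\cap$ and $\cup$, so that $g$ is submodular. Next I would note that with $L:=K$, for $T\in 2^K$ we have $\bx(T)=\sum_{i:T_i=K_i}x_i$, i.e.\ $\bx$ restricted to $2^K$ is exactly the modular (linear) function $x\mapsto \sum_{i\in S}x_i$ under the identification, where $S\subseteq[n]$ corresponds to $T$. Thus $(x,K)\in U_K(f)$ iff $x\geq 0$ and $\sum_{i\in S}x_i\leq g(S)$ for all $S$, and $(x,K)\in B_K(f)$ iff additionally $\sum_{i=1}^n x_i=g([n])$. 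So I need a nonnegative vector $x$ in the submodular polyhedron of $g$ that is tight at the top set $[n]$.

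Then I would construct such an $x$ greedily: order the coordinates $1,\dots,n$ and define $x_i:=g(\{1,\dots,i\})-g(\{1,\dots,i-1\})$. Submodularity of $g$ gives $x_i\geq x_j'$ type inequalities; more directly, for any $S=\{i_1<\dots<i_r\}$ one shows $\sum_{j} x_{i_j}\leq g(S)$ by telescoping and repeatedly applying submodularity (the standard Edmonds greedy argument), and $\sum_{i=1}^n x_i=g([n])-g(\emptyset)=g([n])$ by telescoping. The only point needing care is nonnegativity of each $x_i$: this follows because $g(\{1,\dots,i\})\geq g(\{1,\dots,i-1\})+ (g(\{i\})-g(\emptyset))$ is the wrong direction, so instead I would use that $x_i\geq 0$ needs monotonicity, which $g$ need not have. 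The fix — and this is the main obstacle — is that the vector in the base polytope of a general submodular function need not be nonnegative, whereas $U_K(f)$ demands $x\in\RR_{\geq0}^n$; so I cannot simply invoke $B_K(f)$ being the Boolean base polytope. Instead I must be more clever: either (a) argue that we may reduce to a monotone $g$ by replacing $f$ with $f'(T):=f(T)$ shifted appropriately — but the shift must be $k$-modular and preserve $f(\0)=0$, which constrains it — or (b), more robustly, run the greedy with the coordinates reordered adaptively so that at each step we pick the index $i$ minimizing the marginal $g(S\cup\{i\})-g(S)$, and show by submodularity that these marginals are nondecreasing as $S$ grows while staying $\geq g([n])-g([n]\setminus\{i\})$; combined with the observation that one can also pre-translate using a carefully chosen element of $U(f)$ guaranteed by \lemref{containment}'s reduction structure.

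Actually the cleanest route, which I expect to be the author's, avoids the nonnegativity headache by choosing $L$ adaptively rather than setting $L=K$: build the pair $(x,L)$ simultaneously, coordinate by coordinate, maintaining an invariant that the partial assignment lies in $U_K(f)$ restricted to the coordinates decided so far; at coordinate $i$, one has freedom to pick $L_i$ among the $k\ge 2$ leaves (in particular one may pick $L_i\ne K_i$), and picking $L_i\ne K_i$ makes $\bx_i(K_i)=-x_i\le 0$, which relaxes the binding constraints and lets one always take $x_i\ge 0$ large enough (or zero) to satisfy $\bx(K)=f(K)$ in the end. So the key steps in order are: (1) reduce to the sublattice $2^K$ and note the induced function is "submodular-like"; (2) set up a greedy/inductive construction of $(x,L)$ over coordinates $i=1,\dots,n$; (3) use the extra freedom in choosing the leaf $L_i$ (available precisely because $k\ge2$) to guarantee both the inequality constraints of $U_K(f)$ and nonnegativity of $x$; (4) verify the terminal tightness $\bx(K)=f(K)$ holds, using $f(\0)=0$ and telescoping. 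The main obstacle is step (3): managing the interaction between the sign choices $L_i$ and the nonnegativity constraint $x_i\ge0$ while keeping all $\bx(T)\le f(T)$ for $T\in 2^K$ — this is exactly where the $k\ge2$ hypothesis is used and where the bisubmodular-case intuition from \cite{FujiBook} needs to be generalized.
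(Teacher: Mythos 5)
You have correctly reduced the problem to the classical fact that the base polytope $B(f_{K})$ of the restricted submodular set function is non-empty, and you have correctly identified both the obstacle (a point of $B(f_{K})$ need not be nonnegative, while $U_K(f)$ requires $x \in \RR_{\geq 0}^n$) and the key idea (the choice of the leaf $L_i$ can encode a sign, since $\bx_i(K_i)$ equals $+x_i$ if $L_i = K_i$ and $-x_i$ if $L_i \neq K_i$). But the proof as written stops short: you explicitly leave your step (3) --- managing the interaction between the sign choices and the constraints --- as an unresolved obstacle, and the greedy, coordinate-by-coordinate construction you sketch to handle it is never carried out. That is the gap.

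The paper closes it in one line, with no greedy or adaptive construction at all: the sign-encoding is a \emph{post-processing} step applied to an \emph{arbitrary} element of $B(f_{K})$. Take any $y \in B(f_{K})$ (non-empty by the classical theory), set $x_i := |y_i|$, and choose $L_i := K_i$ if $y_i > 0$ and $L_i \neq K_i$ if $y_i < 0$ (possible precisely because $k \geq 2$; either choice works if $y_i = 0$). Then for every $T \in 2^K$ one has $\bx(T) = \sum_{i\,:\,T_i = K_i} y_i$, i.e.\ $\bx$ restricted to $2^K$ is exactly $y$ under the identification, so all the inequalities $\bx(T) \leq f(T)$ for $T \in 2^K$ and the tightness $\bx(K) = f(K)$ are inherited verbatim from $y \in B(f_{K})$, while $x \geq 0$ holds by construction. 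In other words, the map $B_K(f) \to B(f_{K})$, $(x,L) \mapsto \bx_{K}$, is surjective, and that is all the lemma needs. Your instinct that the hypothesis $k \geq 2$ enters exactly where one picks $L_i \neq K_i$ is right; what you missed is that this choice can be made \emph{after} a base-polytope point is fixed, so there is nothing to interleave and no invariant to maintain.
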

\begin{proof}
For any $g: \TTT^n \ra \RR$ the restriction $g|_{2^K}$ gives a set function $g_{K}: 2^{[n]} \ra \RR$ in the canonical way.
The set function $f_{K}$, obtained from $f$ as shown above, is submodular.
By the known facts about submodular set functions,
the submodular base polyhedron $B(f_{K})$ is non-empty, see for example \cite{Iwata} or \cite{FujiBook}.
The lemma follows if we show that the function
$$B_K(f) \ra B(f_{K})$$
$$(x,L) \mapsto \bx_{K}$$
is surjective. To show this let $y \in B(f_{K})$ and denote $x := (|y_i|)_{i=1}^{n}$. Then for every choice of $L \in \LLL^n$ such that $L_i = K_i$ if $y_i > 0$ and $L_i \neq K_i$ if $y_i < 0$, we have $(x,L) \in B_K(f)$ and $\bx_{K} = y$.
\end{proof}

\begin{cor}\label{cor:non-empty}
$U(f)$ is non-empty.
\end{cor}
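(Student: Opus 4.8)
The plan is simply to chain together the two lemmas just established, since all the real work has already been done. First I would observe that the index set $\LLL^n$ is non-empty: by the standing assumptions $k = |\LLL| \geq 2$ and $n \geq 1$, so one may fix an arbitrary $K \in \LLL^n$ — for instance, pick any single leaf and let $K$ be the constant tuple taking that value in every coordinate.

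Next, apply \lemref{B_K(f)} to this $K$: the base set $B_K(f)$ is non-empty, so choose some $(x,L) \in B_K(f)$. By the first inclusion of \lemref{containment}, namely $B_K(f) \subseteq U(f)$, this same pair satisfies $(x,L) \in U(f)$. Hence $U(f) \neq \es$, which is the claim.

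I do not expect any obstacle in assembling this corollary: the substance lives entirely in \lemrefs{containment}{B_K(f)}, whose proofs already exploit $k$-supermodularity of $\bx$, $k$-submodularity of $f$, and the classical non-emptiness of a submodular base polyhedron. The only thing worth being explicit about is the trivial non-emptiness of $\LLL^n$, which is exactly where the hypotheses $k \geq 1$ and $n \geq 1$ enter.
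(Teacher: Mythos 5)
Your proof is correct and matches the paper's argument exactly: the corollary is obtained by choosing any $K \in \LLL^n$, invoking \lemref{B_K(f)} for non-emptiness of $B_K(f)$, and then the inclusion $B_K(f) \subseteq U(f)$ from \lemref{containment}. Your extra remark about the non-emptiness of $\LLL^n$ is a harmless (and accurate) bit of explicitness.
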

\begin{proof}
 Follows from \lemrefs{containment}{B_K(f)}.
\end{proof}

Let $(x,L) \in U(f)$ be fixed for the remainder of this section. We say that an element $T \in \TTT^n$ is \emph{$(x,L)$-tight} if $$\bx(T) = f(T)$$ holds, and we define $\FFF(x,L)$
as being the set of $(x,L)$-tight elements of $\TTT^n$. We have the following.

\begin{prop}\label{prop:closed}
The set $\FFF(x,L)$ is closed under $\sqcap$ and $\sqcup$, and the function $\bx\big|_{\FFF(x,L)}$ is $k$-modular.
\end{prop}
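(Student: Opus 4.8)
The plan is to exploit the two inequalities available to us: the $k$-supermodularity of $\bx$ (Proposition~\ref{prop:supp}) and the $k$-submodularity of $f$. For any $T, U \in \TTT^n$ we have $\bx(T\sqcap U) + \bx(T\sqcup U) \geq \bx(T) + \bx(U)$ and $f(T\sqcap U) + f(T\sqcup U) \leq f(T) + f(U)$. Subtracting, and writing $g := f - \bx$ (so $g \geq 0$ on all of $\TTT^n$ because $(x,L)\in U(f)$, and $g(T) = 0$ exactly when $T$ is $(x,L)$-tight), we obtain
$$g(T\sqcap U) + g(T\sqcup U) \leq g(T) + g(U).$$
So $g$ is itself $k$-submodular and nonnegative, and $\FFF(x,L) = \{T : g(T) = 0\}$ is precisely its zero set.

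First I would prove closure under $\sqcap$ and $\sqcup$. Take $T, U \in \FFF(x,L)$, so $g(T) = g(U) = 0$. The displayed inequality gives $g(T\sqcap U) + g(T\sqcup U) \leq 0$. Since $g \geq 0$ everywhere, both summands are nonnegative, hence both must be zero; thus $T\sqcap U, T\sqcup U \in \FFF(x,L)$. This is the easy half.

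Second, I would establish that $\bx\big|_{\FFF(x,L)}$ is $k$-modular, i.e.\ that on tight elements the submodular inequality for $\bx$ is an equality. Let $T, U \in \FFF(x,L)$. By the previous paragraph $T\sqcap U$ and $T\sqcup U$ are also tight, so
$$\bx(T\sqcap U) + \bx(T\sqcup U) = f(T\sqcap U) + f(T\sqcup U) \leq f(T) + f(U) = \bx(T) + \bx(U),$$
using tightness of all four elements and the $k$-submodularity of $f$. On the other hand, Proposition~\ref{prop:supp} gives the reverse inequality $\bx(T\sqcap U) + \bx(T\sqcup U) \geq \bx(T) + \bx(U)$ unconditionally. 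Hence equality holds, which is exactly the $k$-modularity of $\bx$ restricted to $\FFF(x,L)$ (note $\FFF(x,L)$ is closed under both operations, so the identity makes sense within the set).

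I do not anticipate a genuine obstacle here: the argument is a two-line "sandwich" between the supermodularity of $\bx$ and the submodularity of $f$, combined with nonnegativity of $g = f - \bx$. The only point requiring a little care is to make sure the closure statement is proved first, so that in the $k$-modularity step the values $\bx(T\sqcap U)$ and $\bx(T\sqcup U)$ may legitimately be replaced by $f(T\sqcap U)$ and $f(T\sqcup U)$; the order of the two parts in the write-up matters for that reason.
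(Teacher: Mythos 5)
Your proof is correct and is essentially the paper's argument: the paper runs the same sandwich $f(T)+f(U)=\bx(T)+\bx(U)\leq\bx(T\sqcap U)+\bx(T\sqcup U)\leq f(T\sqcap U)+f(T\sqcup U)\leq f(T)+f(U)$ in a single chain and concludes equality throughout. Your reformulation via the nonnegative $k$-submodular function $g=f-\bx$ is only a cosmetic repackaging of the same two ingredients.
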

\begin{proof}
Let $T, U \in \FFF(x,L)$.
By the definition of $U(f)$ we have
\begin{equation}\label{eq:cap}
 \bx(T \sqcap U) \leq f(T \sqcap U)
\end{equation}
and 
\begin{equation}\label{eq:cup}
 \bx(T \sqcup U) \leq f(T \sqcup U).
\end{equation}
Together with the $k$-submodularity of $f$ and the $k$-supermodularity of $\bx$ this yields
\begin{eqnarray*}
 f(T) + f(U)
&=&  \bx(T) +  \bx(U)\\
&\leq&  \bx(T \sqcap U) +  \bx(T \sqcup U)\\
&\leq&  f(T \sqcap U) +  f(T \sqcup U)\\
&\leq&  f(T) + f(U).
\end{eqnarray*}
We thus have equality here as well as in \eqs{cap}{cup}.
\end{proof}
Let \emph{$\supp(x)$} denote the support of $x$.

The next lemma is the key lemma for the proof of our main theorem, as it essentially provides a reduction to the case $k=2$, which is the bisubmodular case.

Informally, for any coordinate $i \in \supp(x)$, if we call $L_i$ the ``positive'' leaf and $\LLL \setminus \{L_i\}$ the set of ``negative'' leaves, the lemma states that, for tight elements, at most one ``negative'' leaf is possible in this coordinate.
\begin{lem}\label{lem:unique}
 If $T, U \in \FFF(x,L)$ and $i \in \supp(x)$ are such that $T_i, U_i \in \LLL \setminus \{L_i\}$, then $T_i=U_i$.
\end{lem}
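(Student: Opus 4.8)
The plan is to argue by contradiction: suppose $T,U\in\FFF(x,L)$, $i\in\supp(x)$, $T_i,U_i\in\LLL\setminus\{L_i\}$, and $T_i\ne U_i$, and derive a contradiction from the rigidity supplied by \propref{closed}.

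First I would compute what happens in coordinate $i$ when we form $T\sqcap U$ and $T\sqcup U$. Since $T_i$ and $U_i$ are distinct leaves, the definition of $\sqcap$ and $\sqcup$ on $\TTT$ gives $(T\sqcap U)_i = T_i\sqcap U_i = \oo$ and $(T\sqcup U)_i = T_i\sqcup U_i = \oo$, whence $\bx_i((T\sqcap U)_i)=\bx_i((T\sqcup U)_i)=\bx_i(\oo)=0$. On the other hand $T_i,U_i\in\LLL\setminus\{L_i\}$ gives $\bx_i(T_i)=\bx_i(U_i)=-x_i$, so in coordinate $i$ the quantity
\[
\bx_i((T\sqcap U)_i)+\bx_i((T\sqcup U)_i)-\bx_i(T_i)-\bx_i(U_i)\ =\ 2x_i
\]
is strictly positive, because $i\in\supp(x)$ means $x_i>0$.

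Next I would invoke \propref{closed}: since $\FFF(x,L)$ is closed under $\sqcap$ and $\sqcup$, both $T\sqcap U$ and $T\sqcup U$ lie in $\FFF(x,L)$, and since $\bx\big|_{\FFF(x,L)}$ is $k$-modular we get $\bx(T\sqcap U)+\bx(T\sqcup U)=\bx(T)+\bx(U)$. Expanding each of these as a sum over the $n$ coordinates, this reads
\[
\sum_{j=1}^{n}\Big(\bx_j((T\sqcap U)_j)+\bx_j((T\sqcup U)_j)-\bx_j(T_j)-\bx_j(U_j)\Big)\ =\ 0 .
\]
By the componentwise $k$-supermodularity of $\bx$ used in the proof of \propref{supp}, every summand is $\ge 0$; hence every summand must in fact equal $0$. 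But the $i$-th summand equals $2x_i>0$ by the previous paragraph --- a contradiction. Therefore $T_i=U_i$.

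I do not expect a real obstacle here; the only points demanding care are getting the inequality directions right (the $\bx_j$ are \emph{super}modular, so $\bx_j(a\sqcap b)+\bx_j(a\sqcup b)-\bx_j(a)-\bx_j(b)\ge 0$) and observing that for two distinct leaves both the meet and the join collapse to $\oo$, which is exactly what makes the $i$-th term strictly positive. If one prefers to avoid citing \propref{closed}, the identity $\bx(T\sqcap U)+\bx(T\sqcup U)=\bx(T)+\bx(U)$ can instead be obtained directly from the sandwich $\bx(T)+\bx(U)\le \bx(T\sqcap U)+\bx(T\sqcup U)\le f(T\sqcap U)+f(T\sqcup U)\le f(T)+f(U)=\bx(T)+\bx(U)$, using tightness of $T$ and $U$, the membership $(x,L)\in U(f)$, the $k$-submodularity of $f$, and the $k$-supermodularity of $\bx$.
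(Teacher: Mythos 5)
Your proof is correct and is essentially the argument the paper gives: both use \propref{closed} to get $\bx(T\sqcap U)+\bx(T\sqcup U)=\bx(T)+\bx(U)$, then the componentwise $k$-supermodularity of the $\bx_j$ to force equality in each coordinate, and finally observe that in coordinate $i$ the two sides would be $0$ and $-2x_i<0$ if $T_i,U_i$ were distinct leaves other than $L_i$. No substantive difference.
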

\begin{proof}
Let $T, U \in \FFF(x,L)$.

\propref{closed} and $k$-supermodularity of $\bx_i$ for every $i \in [n]$ yields
$$\bx_i(T_i)+\bx_i(U_i)=\bx_i(T_i\sqcup U_i)+\bx_i(T_i\sqcap U_i)$$
 for every $i \in [n]$
which is not possible if $i \in \supp(x)$ and $T_i, U_i \in \LLL \setminus \{L_i\}$ and $T_i\neq U_i$, as then the left hand side would be negative and the right hand side would be zero.
\end{proof}

Let 
$$S(x,L) := \menge{i \in \supp(x)}{\exists \ \ T \in \FFF(x,L):\ T_i \in  \LLL \setminus \{L_i\}}.$$ 
For every $i \in S(x,L)$, the leaf $T_i$ is unique by \lemref{unique}, independent of the chosen $T \in \FFF(x,L)$ with $T_i \in  \LLL \setminus \{L_i\}$. We denote it by $\ol{L}_i$ and define
$$N((x,L),i) := \bigsqcap \menge{T \in \FFF(x,L)}{T_i = \ol{L}_i}.$$
It is well-defined since the operation $\sqcap$ is associative.
The introduction of the ``negative'' leaf $\ol{L}_i$ is a core point in our proof. \lemref{unique} provides a partial reduction to the bisubmodular case in the following sense. For $(x,L)$-tight elements, in every coordinate $i \in \supp(x)$ we now have only the choice between the leaves $L_i$ and $\ol{L}_i$,  ``positive'' and ``negative'' leaf, as it would be in the bisubmodular case where we only have two leaves available.

\begin{lem}\label{lem:compliant1}
 Let $T \in \FFF(x,L)$ and $i \in S(x,L)$ such that $T_i \leq \ol{L}_i$.
If $j \in [n]$ such that $N((x,L),i)_j \in  \LLL$, then
$$T_j \leq N((x,L),i)_j.$$
\end{lem}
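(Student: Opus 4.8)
The plan is to exploit \propref{closed} (closure of $\FFF(x,L)$ under both $\sqcap$ and $\sqcup$) together with the fact that, with respect to the componentwise order $\leq$ on $\TTT^n$, the operation $\sqcap$ is the meet; hence $N := N((x,L),i)$ is the greatest lower bound of the finite, non-empty set $M := \{S \in \FFF(x,L) : S_i = \ol{L}_i\}$, which is non-empty because $i \in S(x,L)$. In particular $N \in \FFF(x,L)$ and $N_i = \ol{L}_i$. Note that $\sqcup$ is \emph{not} a lattice join here (two distinct leaves have no common upper bound), so the useful inequality must come from $N$ being a lower bound of the concrete set $M$, into which $T \sqcup N$ will be shown to fall, rather than from any general lattice identity.

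First I would form the element $T \sqcup N$, which lies in $\FFF(x,L)$ by \propref{closed}. The key computation is its $i$-th coordinate: the hypothesis $T_i \leq \ol{L}_i$ means $T_i \in \{\oo, \ol{L}_i\}$, and in either case $(T \sqcup N)_i = T_i \sqcup \ol{L}_i = \ol{L}_i$. Hence $T \sqcup N \in M$, and therefore $N \leq T \sqcup N$, i.e.\ $N_j \leq T_j \sqcup N_j$ for every $j \in [n]$.

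Finally I would read off the conclusion coordinatewise. Fix $j$ with $N_j \in \LLL$. Since leaves are maximal elements of $\TTT$, the inequality $N_j \leq T_j \sqcup N_j$ forces $T_j \sqcup N_j = N_j$; and by the definition of $\sqcup$, for a leaf $\ell$ one has $t \sqcup \ell = \ell$ only when $t \in \{\oo, \ell\}$, i.e.\ $t \leq \ell$. Thus $T_j \leq N_j$, as required.

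I do not expect a genuine obstacle; the one point that needs care is the $i$-th coordinate computation in the second paragraph — it is precisely the hypothesis $T_i \leq \ol{L}_i$ (rather than, say, $T_i = L_i$ or $T_i$ being a third leaf) that keeps $T \sqcup N$ inside the set $M$ over whose meet $N$ is defined, so that the minimality of $N$ can be invoked. One should also keep in mind that $\sqcap$, unlike $\sqcup$, really is the componentwise meet, which is what makes $N$ a lower bound of $M$ in the first place.
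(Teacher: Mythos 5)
Your proof is correct and follows essentially the same route as the paper's: both hinge on observing that $T\sqcup N((x,L),i)$ lies in $\FFF(x,L)$ with $i$-th coordinate $\ol{L}_i$ (which is exactly where the hypothesis $T_i\leq\ol{L}_i$ is used), and then invoking the minimality of $N((x,L),i)$ as the $\sqcap$ of all such elements. The paper phrases this by showing $T':=(T\sqcup N((x,L),i))\sqcap N((x,L),i)$ equals $N((x,L),i)$, whereas you directly use that $N((x,L),i)$ is a lower bound of the defining set; this is only a cosmetic difference.
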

\begin{proof}
Define $T' := (T \sqcup N((x,L),i)) \sqcap N((x,L),i)$. As $T' \in \FFF(x,L)$ and $T'_i = \ol{L}_i$, the definition of $N((x,L),i)$ yields $T'_j = N((x,L),i)_j$. So $(T \sqcup N((x,L),i))_j = N((x,L),i)_j$ holds which yields $T_j \leq N((x,L),i)_j$.
\end{proof}

For $i \in [n]$ let $\chi_i$ denote the \emph{characteristic vector}, \ie  $(\chi_i)_i = 1$ and $(\chi_i)_j = 0$ for $j \in [n] \setminus \{i\}$.

\begin{lem}\label{lem:ij}
 Let $ i \in S(x,L)$ and $j \in \supp(x)$. If $N((x,L),i)_j = L_j$, then there is an $\al > 0$ such that $(x-\al(\chi_i + \chi_j),L) \in U(f).$
\end{lem}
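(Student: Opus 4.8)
The goal is to show that when $N((x,L),i)_j = L_j$ for some $i \in S(x,L)$ and $j \in \supp(x)$, we can decrease both $x_i$ and $x_j$ by a common small amount $\alpha > 0$ while staying in $U(f)$. Decreasing $x_i$ and $x_j$ changes $\overline{(x,L)}(T)$ only in coordinates $i$ and $j$: for coordinate $i$, the value goes up by $\alpha$ if $T_i \neq L_i$ (i.e.\ $T_i$ is the root or a ``negative'' leaf) and down by $\alpha$ if $T_i = L_i$; similarly for $j$. So for the perturbed point $(x - \alpha(\chi_i + \chi_j), L)$ the constraint at $T$ can only be violated if the net change $\overline{(x,L)}(T) \mapsto \overline{(x,L)}(T) + \alpha \cdot c_i(T) + \alpha \cdot c_j(T)$ is positive, where $c_i(T) \in \{-1, +1\}$ depending on whether $T_i = L_i$. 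Thus the dangerous elements $T$ are exactly those with $T_i \neq L_i$ and $T_j \neq L_j$ (net change $+2\alpha$) and, to a lesser extent, those with exactly one of the two inequalities (net change $0$, harmless) — wait, net change $0$ is harmless, so the \emph{only} dangerous $T$ are those with both $T_i \neq L_i$ and $T_j \neq L_j$.

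\textbf{Key steps.} First I would make the above bookkeeping precise: define $g(\alpha) := \overline{(x - \alpha(\chi_i+\chi_j),L)}$ and observe $g(\alpha)(T) = \overline{(x,L)}(T) + \alpha \delta(T)$ where $\delta(T) = [T_i \neq L_i] + [T_j \neq L_j] - [T_i = L_i] - [T_j = L_j] \in \{-2,0,2\}$ (treating the root coordinate: if $T_i = \oo$ then $\overline{(x,L)}_i(T_i) = 0$ and decreasing $x_i$ leaves it $0$, so that coordinate contributes $+1$ to $\delta$ in the sense of the ``$T_i \neq L_i$'' bracket but actually contributes $0$ to the change — I need to recheck this: $\overline{(x,L)}_i(\oo) = 0$ regardless of $x_i$, so the change from coordinate $i$ is $0$ when $T_i = \oo$; the change is $+\alpha$ only when $T_i$ is a negative leaf, and $-\alpha$ when $T_i = L_i$). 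Correcting: $\delta(T) = \big([\![T_i \in \LLL \setminus \{L_i\}]\!] - [\![T_i = L_i]\!]\big) + \big([\![T_j \in \LLL \setminus \{L_j\}]\!] - [\![T_j = L_j]\!]\big)$, so $\delta(T) = 2$ iff both $T_i$ and $T_j$ are negative leaves. Second, the constraint $g(\alpha)(T) \leq f(T)$ needs checking only where $\delta(T) > 0$, i.e.\ $\delta(T) = 2$, i.e.\ $T_i \in \LLL\setminus\{L_i\}$ and $T_j \in \LLL\setminus\{L_j\}$; for all other $T$ the constraint holds for every $\alpha \geq 0$ since $\overline{(x,L)}(T) \leq f(T)$ already. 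Third, for the set $\mathcal{D}$ of these dangerous $T$, I would argue $\mathcal{D}$ contains no $(x,L)$-tight element: if $T \in \mathcal{D}$ were tight, then $T_i \in \LLL \setminus \{L_i\}$ forces $i \in S(x,L)$ and $T_i = \ol{L}_i$ (by \lemref{unique}), and then by \lemref{compliant1} applied with this $T$ and $i$ (noting $T_i = \ol{L}_i \leq \ol{L}_i$) and by the hypothesis $N((x,L),i)_j = L_j \in \LLL$, we get $T_j \leq N((x,L),i)_j = L_j$, so $T_j \in \{\oo, L_j\}$, contradicting $T_j \in \LLL \setminus \{L_j\}$. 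Hence every $T \in \mathcal{D}$ satisfies $\overline{(x,L)}(T) < f(T)$ strictly. Fourth, since $\TTT^n$ is finite, $\alpha := \min_{T \in \mathcal{D}} \tfrac{1}{2}\big(f(T) - \overline{(x,L)}(T)\big) > 0$ (and $\alpha$ can be taken arbitrary positive if $\mathcal{D} = \es$); then $g(\alpha)(T) = \overline{(x,L)}(T) + 2\alpha \leq f(T)$ for $T \in \mathcal{D}$ and $g(\alpha)(T) \leq f(T)$ elsewhere, so $(x - \alpha(\chi_i+\chi_j), L) \in U(f)$. I should also note $x_i, x_j > 0$ since $i, j \in \supp(x)$, so for small enough $\alpha$ the vector $x - \alpha(\chi_i+\chi_j)$ still lies in $\RR_{\geq 0}^n$; take $\alpha$ to be the min of the above and $x_i, x_j$.

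\textbf{Main obstacle.} The crux is step three — showing no dangerous element is tight — and this is exactly where \lemref{compliant1} and the structural hypothesis $N((x,L),i)_j = L_j$ are used; the rest is routine perturbation bookkeeping. The one subtlety to get right is the case analysis on the root coordinate when computing $\delta(T)$, making sure that $T_i = \oo$ contributes zero (not $-1$) to the change in that coordinate, so that the genuinely dangerous set is only $\{T : T_i, T_j \text{ both negative leaves}\}$ and not anything larger; if I mistakenly included $T_i = \oo$ as dangerous I would need $\oo$-coordinate elements to be non-tight too, which need not hold.
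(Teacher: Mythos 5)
There is a genuine gap in your identification of the ``dangerous'' elements. With your corrected bookkeeping, $\delta(T)$ takes values in $\{-2,-1,0,1,2\}$, not $\{-2,0,2\}$: if exactly one of the two coordinates is a ``negative'' leaf and the other is the root $\oo$, then that coordinate contributes $+1$ and the root coordinate contributes $0$, so $\delta(T)=1>0$ and such a $T$ is also dangerous. Hence your step ``$\delta(T)>0$, i.e.\ $\delta(T)=2$'' is false, and you must additionally rule out tight elements of the two forms (b) $T_i\in\LLL\setminus\{L_i\}$, $T_j=\oo$ and (c) $T_i=\oo$, $T_j\in\LLL\setminus\{L_j\}$. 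Case (c) is excluded by the very argument you already use: \lemref{compliant1} applies with $T_i=\oo\leq\ol{L}_i$ and gives $T_j\leq L_j$, contradicting $T_j\in\LLL\setminus\{L_j\}$. Case (b), however, is \emph{not} excluded by \lemref{compliant1}, whose conclusion $T_j\leq L_j$ is perfectly consistent with $T_j=\oo$; here you need a different argument, namely the definition of $N((x,L),i)$: by \lemref{unique} such a tight $T$ has $T_i=\ol{L}_i$, so it participates in the intersection defining $N((x,L),i)$, and $T_j=\oo$ forces $N((x,L),i)_j=\oo\neq L_j$, contradicting the hypothesis. This is exactly how the paper organizes its proof: its Case 1 ($T_i\in\LLL\setminus\{L_i\}$ and $T_j\neq L_j$, which covers your case (a) together with case (b)) is dispatched directly by the definition of $N((x,L),i)$, and its Case 2 (your case (c)) by \lemref{compliant1}. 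The remainder of your perturbation argument (finiteness, strict slack on non-tight constraints, keeping $x-\al(\chi_i+\chi_j)$ nonnegative using $i,j\in\supp(x)$) is sound; only the case analysis needs to be completed as above.
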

\begin{proof}
 Assume that for all $\al > 0$ one has $(x-\al(\chi_i + \chi_j),L) \notin U(f).$ Then there is a $T \in \FFF(x,L)$ such that $\br{\chi_i + \chi_j}(T) < 0$. So either
\begin{enumerate}
 \item $T_i \in  \LLL \setminus \{L_i\}$ and $T_j \neq L_j$ or
 \item $T_j \in  \LLL \setminus \{L_j\}$ and $T_i =\oo$.
\end{enumerate}
Case 2. is a contradiction to \lemref{compliant1}. Case 1. yields
\begin{enumerate}
 \item $T_i = \ol{L}_i$ and $T_j \neq L_j$,
\end{enumerate}
which is a contradiction to $N((x,L),i)_j = L_j$ by the definition of $N((x,L),i)$.
\end{proof}

\begin{lem}\label{lem:leq}
If $S(x,L) = \supp(x)$
and the operation $\sqcup$ is associative on the set
$\menge{N((x,L),i)}{i \in S(x,L)}$, then
$$\min_{T \in \TTT^n}f(T) \leq - \|x\|.$$
\end{lem}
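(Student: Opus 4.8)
\textbf{Proof plan for \lemref{leq}.}

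The plan is to construct a point $\hat T \in \TTT^n$ that is $(x,L)$-tight and whose value $f(\hat T) = \bx(\hat T)$ is as negative as $-\|x\|$, which would immediately give $\min_{T\in\TTT^n} f(T) \le f(\hat T) = -\|x\| $. The natural candidate is to start from a tight element witnessing that $\supp(x)\subseteq S(x,L)$ and to ``push'' every coordinate in $\supp(x)$ onto its negative leaf $\ol L_i$, so that $\bx_i(\hat T_i) = -x_i$ for every $i\in\supp(x)$ and hence $\bx(\hat T) = -\sum_{i\in\supp(x)} x_i = -\|x\|$. Concretely, I would take $\hat T := \bigsqcup_{i\in S(x,L)} N((x,L),i)$, which is well defined by the associativity hypothesis on $\sqcup$ over the $N((x,L),i)$'s, with the convention that $\hat T := \0$ if $S(x,L)=\es$ (in which case $x=0$ and the statement is trivial).

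First I would check that $\hat T$ is $(x,L)$-tight. Each $N((x,L),i)$ is an infimum (under $\sqcap$) of tight elements, so it lies in $\FFF(x,L)$ by \propref{closed}; then the iterated $\sqcup$ of tight elements is again tight by \propref{closed}. Second, and this is the crux, I would show $\hat T_i = \ol L_i$ for every $i\in S(x,L) = \supp(x)$, which forces $\bx_i(\hat T_i) = -x_i$ on all of $\supp(x)$ and therefore $\bx(\hat T) = -\|x\|$. Here the point is that taking the $\sqcup$-join cannot destroy a leaf-value: for a fixed $i\in S(x,L)$, the term $N((x,L),i)$ already has $i$-th coordinate equal to $\ol L_i \in \LLL\setminus\{L_i\}$, and I must argue that joining with the other terms $N((x,L),j)$ leaves this coordinate unchanged, i.e.\ that $N((x,L),j)_i$ is either $\oo$ or $\ol L_i$ but never a third leaf and never $L_i$. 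That the $j$-th join term has no ``bad'' leaf in coordinate $i$ should follow from \lemref{unique} applied within $\FFF(x,L)$ (so no leaf of $\LLL\setminus\{L_i\}$ other than $\ol L_i$ appears in a tight element's $i$-th coordinate) together with \lemref{ij} and the hypothesis $S(x,L)=\supp(x)$: if some $N((x,L),j)$ had $i$-th coordinate $L_i$, then since $i\in\supp(x)$ \lemref{ij} would produce $\al>0$ with $(x-\al(\chi_j+\chi_i),L)\in U(f)$, contradicting the tightness structure — more precisely, contradicting that $i\in S(x,L)$ realizes $\ol L_i$; and by \lemref{unique} no leaf other than $\ol L_i$ can occur. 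Thus each coordinate $i\in S(x,L)$ of $\hat T$ equals $\ol L_i$, as desired.

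Finally, once $\hat T\in\FFF(x,L)$ with $\hat T_i=\ol L_i$ for all $i\in\supp(x)$, compute $f(\hat T) = \bx(\hat T) = \sum_{i=1}^n \bx_i(\hat T_i) = \sum_{i\in\supp(x)} (-x_i) + \sum_{i\notin\supp(x)} 0 = -\|x\|$, and conclude $\min_{T\in\TTT^n} f(T) \le f(\hat T) = -\|x\|$. The main obstacle I anticipate is the bookkeeping in the second step: showing that forming the big $\sqcup$-join really does set every $\supp(x)$-coordinate to its negative leaf, i.e.\ that \lemref{compliant1} / \lemref{ij} together with \lemref{unique} genuinely rule out both the appearance of the ``positive'' leaf $L_i$ and of a spurious third leaf in coordinate $i$ of the other join terms. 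The associativity hypothesis on $\sqcup$ is exactly what makes $\hat T$ well-defined, and the hypothesis $S(x,L)=\supp(x)$ is exactly what lets \lemref{ij} be invoked for every relevant pair $(i,j)$.
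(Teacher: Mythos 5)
Your construction is exactly the paper's: form $T=\bigsqcup_{i\in\supp(x)}N((x,L),i)$, note $T\in\FFF(x,L)$ with $T_i=\ol{L}_i$ for all $i\in\supp(x)$, and read off $f(T)=\bx(T)=-\|x\|$. The tightness step and the final computation are fine. The problem is in how you justify that no other join term can carry the leaf $L_i$ in coordinate $i$: you invoke \lemref{ij} and claim that the resulting $(x-\al(\chi_i+\chi_j),L)\in U(f)$ ``contradicts that $i\in S(x,L)$ realizes $\ol{L}_i$''. It does not. Having $(x-\al(\chi_i+\chi_j),L)\in U(f)$ is perfectly compatible with $i\in S(x,L)$ (it merely forces every tight element with $i$-th coordinate $\ol{L}_i$ to have $j$-th coordinate $L_j$). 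In the paper this conclusion of \lemref{ij} only becomes a contradiction later, in the proof of \thmref{main}, where $\|\hat{x}\|$ is assumed minimal over $U(f)$; \lemref{leq} itself carries no minimality hypothesis, so this step of your argument fails as written.

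The repair is immediate, and it is how the paper implicitly reads its own hypothesis: on $\TTT$ the operation $\sqcup$ fails to be associative precisely when two distinct leaves meet (for distinct leaves $a,b$ one has $(a\sqcup b)\sqcup b=b$ while $a\sqcup(b\sqcup b)=\oo$). Hence the assumption that $\sqcup$ is associative on $\menge{N((x,L),i)}{i\in S(x,L)}$ already forces, for every coordinate $m$, that the values $N((x,L),j)_m$ contain at most one leaf. Since $N((x,L),i)_i=\ol{L}_i$, the join therefore has $i$-th coordinate $\ol{L}_i$ for every $i\in\supp(x)=S(x,L)$, with no appeal to \lemref{unique} or \lemref{ij} needed. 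Those two lemmas are the tools by which the \emph{caller} of \lemref{leq} establishes the associativity hypothesis (using minimality); inside \lemref{leq} the hypothesis itself does all the work.
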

\begin{proof}
We will show that there is a $T \in \TTT^n$ with $f(T) = -\|x\|.$ As we have associativity we can define
$$T := \bigsqcup_{i \in \supp(x)}N((x,L),i).$$
We have $T \in \FFF(x,L)$ and $T|_{\supp(x)} = \ol{L}|_{\supp(x)}$, so
$$f(T) = \brr{x}{L}(T)
 = -\brr{x}{L}(L) = -\|x\|.$$
\end{proof}

\subsection{Proof of \thmref{main}}\label{sec:subproof}
We now have collected all the properties of $U(f)$
we will need for the proof of the main theorem.

\begin{proof}[Proof of \thmref{main}]
For any $T \in \TTT^n$ and $(x,L) \in U(f)$, one has by definition
$$f(T) \geq \bx(T) \geq -\|x\|,$$
so
$$\min_{T \in \TTT^n}f(T) \geq \max_{(x,L) \in U(f)} -\|x\|.$$
To show $\displaystyle \min_{T \in \TTT^n}f(T) \leq - \min_{(x,L) \in U(f)} \|x\| = \max_{(x,L) \in U(f)} -\|x\|,$ we choose a $(\hat{x},\hat{L}) \in U(f)$ with $\displaystyle \|\hat{x}\|= \min_{(x,L) \in U(f)} \|x\|$.

By \lemref{leq} it is sufficient to show that
one has $S(\hat{x},\hat{L}) = \supp(\hat{x})$
and the operation $\sqcup$ is associative on $\menge{N((\hat{x},\hat{L}),i)}{i \in S(\hat{x},\hat{L})}$.

By minimality of $(\hat{x},\hat{L})$ for all $i \in \supp(\hat{x})$, one has
$$ \forall\ \al > 0\ \ (\hat{x}-\al\chi_i,\hat{L}) \notin U(f).$$
This means that there is a $T \in \FFF(\hat{x},\hat{L})$ such that for all $\al \in\ ]0,x_i]$ one has $\ol{(\hat{x}-\al\chi_i)}(T) > f(T)$, which yields $T_i \in \LLL \setminus \{\hat{L}_i\}$,
so $i \in S(\hat{x},\hat{L})$. 

To prove the associativity it is sufficient to show that for all $i, j \in S(\hat{x},\hat{L})$ and $m \in [n]$ 
the $m$-th coordinates
$N((\hat{x},\hat{L}),i)_m$ and
$N((\hat{x},\hat{L}),j)_m$ cannot be distinct leaves. 
We cannot have $N((\hat{x},\hat{L}),i)_j = \hat{L}_j$ by the minimality of $(\hat{x},\hat{L})$ and \lemref{ij}, so as $N((\hat{x},\hat{L}),i) \in \FFF(\hat{x},\hat{L})$ \lemref{unique} gives
$N((\hat{x},\hat{L}),i)_j \leq \ol{\hat{L}}_j.$
If $N((\hat{x},\hat{L}),j)_m \in  \LLL$, by \lemref{compliant1} one has
$N((\hat{x},\hat{L}),i)_m \leq N((\hat{x},\hat{L}),j)_m.$
\end{proof}

\subsection{An Integer Minimizer}

In this section we will show the existence of an integer minimizer of $\|x\|$ in $U(f)$ 
if the function $f$ is integer.

Let $f: \TTT^n \ra \ZZ$ be $k$-submodular, and $f(\0) = 0$. Let
$$\U(f) := U(f) \sqcap (\ZZ_{\geq 0}^n \times \LLL^n).$$

\begin{cor}
$\U(f)$ is non-empty.
\end{cor}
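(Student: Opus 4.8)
The plan is to mirror the proof of \corref{non-empty}, but this time keeping track of integrality throughout. Recall that $U(f)$ is non-empty by \corref{non-empty}, and in fact that argument shows $B_K(f) \subseteq U(f)$ for any fixed $K \in \LLL^n$, so it suffices to exhibit an element $(x,L) \in B_K(f)$ with $x \in \ZZ_{\geq 0}^n$. Fix any $K \in \LLL^n$.

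First I would recall from the proof of \lemref{B_K(f)} that the restriction of $f$ to $2^K$ gives a submodular set function $f_K : 2^{[n]} \ra \ZZ$, which is now \emph{integer-valued} since $f$ is. The classical theory of submodular functions (see \cite{Iwata} or \cite{FujiBook}) tells us that the integral base polyhedron of an integer submodular set function is non-empty; concretely, one obtains an integral base vector $y \in B(f_K) \cap \ZZ^n$ for instance by the greedy algorithm applied to any linear ordering of $[n]$, since each greedy step sets a coordinate to a difference of two values of $f_K$, hence to an integer. Then I would invoke the surjection $B_K(f) \ra B(f_K)$, $(x,L) \mapsto \bx_K$ constructed in the proof of \lemref{B_K(f)}: setting $x := (|y_i|)_{i=1}^n$ and choosing $L_i = K_i$ when $y_i > 0$, $L_i \neq K_i$ when $y_i \le 0$, gives $(x,L) \in B_K(f)$ with $\bx_K = y$. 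Since $y$ is integral, $x = (|y_i|)_i \in \ZZ_{\geq 0}^n$, so $(x,L) \in U(f) \sqcap (\ZZ_{\geq 0}^n \times \LLL^n) = \U(f)$, proving it non-empty.

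There is no real obstacle here; the only point that needs a sentence of care is the appeal to integrality of the submodular base polyhedron, which is standard but should be cited precisely, and the observation that $x = (|y_i|)_i$ inherits integrality from $y$ trivially. In short: \emph{Follows from \corref{non-empty} together with the integrality of the submodular base polyhedron of an integer submodular set function: in the proof of \lemref{B_K(f)} one may take the base vector $y \in B(f_K)$ to be integral, whence $x = (|y_i|)_{i=1}^n \in \ZZ_{\geq 0}^n$ and $(x,L) \in \U(f)$.}
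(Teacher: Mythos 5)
Your proof is correct and follows essentially the same route as the paper's: the paper likewise combines \lemref{containment}, the construction in the proof of \lemref{B_K(f)}, and the integrality of the submodular base polyhedron of an integer submodular set function. You simply spell out the details (integral base vector $y$, hence integral $x=(|y_i|)_i$) that the paper leaves implicit.
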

\begin{proof}
Follows from \lemref{containment} as in the proof of \lemref{B_K(f)} together with the fact that
the submodular base polyhedron of an integer function has integer vertices, see for example \cite{Edmonds}, \cite{Iwata}, or \cite{FujiBook}. 
\end{proof}

The proof of the next lemma basically follows the proof of \thmref{main}, we just have to be a bit more careful with the minimality arguments.

\begin{lem}\label{lem:compliant3}
We have
\begin{equation*}
 \min_{T \in \TTT^n}f(T) = \max_{(x,L) \in \U(f)} - \|x\|.
\end{equation*}
\end{lem}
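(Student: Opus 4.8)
The plan is to mimic the proof of \thmref{main} verbatim, tracking an additional integrality invariant through the minimality argument. As before, the inequality $\min_{T}f(T)\ge\max_{(x,L)\in\U(f)}-\|x\|$ is immediate, since $\U(f)\subseteq U(f)$ and for any $(x,L)\in U(f)$ and $T\in\TTT^n$ one has $f(T)\ge\bx(T)\ge-\|x\|$. For the reverse inequality I would choose $(\hat x,\hat L)\in\U(f)$ minimizing $\|\hat x\|$ over the discrete set $\U(f)$; this minimum exists because $\U(f)$ is a nonempty subset of $\ZZ_{\ge 0}^n\times\LLL^n$ and $\|\cdot\|$ is bounded below by $-\min_T f(T)$ (via the easy inequality just shown, applied to any fixed point of $\U(f)$ after noting the set is nonempty). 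By \lemref{leq} it then suffices to establish that $S(\hat x,\hat L)=\supp(\hat x)$ and that $\sqcup$ is associative on $\{N((\hat x,\hat L),i):i\in S(\hat x,\hat L)\}$.

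The key point is that in the integer setting the relevant perturbations must be by integer amounts, i.e.\ by $\chi_i$ (or $\chi_i+\chi_j$) itself rather than by $\al\chi_i$ for small $\al>0$. So I would argue: for $i\in\supp(\hat x)$, minimality of $\|\hat x\|$ in $\U(f)$ forces $(\hat x-\chi_i,\hat L)\notin\U(f)$; since $\hat x-\chi_i\in\ZZ_{\ge 0}^n$ (as $\hat x_i\ge 1$), this means $(\hat x-\chi_i,\hat L)\notin U(f)$, hence there is $T\in\TTT^n$ with $\ol{(\hat x-\chi_i)}(T)>f(T)\ge\bx(T)$. A direct computation of $\bx(T)-\ol{(\hat x-\chi_i)}(T)$ shows this difference lies in $\{-1,0,1\}$ and is positive only when $T_i\in\LLL\setminus\{\hat L_i\}$, and in that case $\bx(T)=f(T)$, so $T\in\FFF(\hat x,\hat L)$ witnesses $i\in S(\hat x,\hat L)$. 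The same care applies to the associativity step: for $i,j\in S(\hat x,\hat L)$, if $N((\hat x,\hat L),i)_j=\hat L_j$ then \lemref{ij} would give some $\al>0$ with $(\hat x-\al(\chi_i+\chi_j),\hat L)\in U(f)$, but one needs $\al=1$ to stay integral; I would observe that the proof of \lemref{ij} actually produces membership for all sufficiently small $\al$ up to the point where a tight constraint is hit, and since all data ($f$ integer, $\hat x$ integer) is integral that breakpoint is an integer, so in fact $(\hat x-(\chi_i+\chi_j),\hat L)\in U(f)\cap(\ZZ_{\ge 0}^n\times\LLL^n)=\U(f)$ provided $\hat x_i,\hat x_j\ge 1$ (which holds since $i,j\in\supp(\hat x)$), contradicting minimality of $\|\hat x\|$. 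Hence $N((\hat x,\hat L),i)_j\ne\hat L_j$, and the rest of the argument (invoking \lemrefs{unique}{compliant1}) goes through unchanged to show no two of the $N((\hat x,\hat L),i)_m$ can be distinct leaves.

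Finally, \lemref{leq} applied to $(\hat x,\hat L)$ gives a $T\in\TTT^n$ with $f(T)=-\|\hat x\|$, so $\min_T f(T)\le-\|\hat x\|=\max_{(x,L)\in\U(f)}-\|x\|$, completing the proof.

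I expect the main obstacle to be the integral version of \lemref{ij}: its stated proof works with an unspecified small $\al>0$, and I need to know the largest feasible $\al$ is a positive integer (equivalently at least $1$). This requires either re-examining that proof to extract that the obstruction it rules out (a tight $T$ with $\br{\chi_i+\chi_j}(T)<0$) cannot occur for the \emph{integer} point $\hat x-(\chi_i+\chi_j)$ specifically, or observing that $\br{\chi_i+\chi_j}(T)$ is integer-valued so ``$<0$'' means ``$\le-1$'' and the step from $\hat x$ to $\hat x-(\chi_i+\chi_j)$ never crosses a constraint it did not already violate — exactly the reasoning flagged in the remark preceding the lemma that ``we just have to be a bit more careful with the minimality arguments.'' The analogous care for the single-coordinate perturbation $\chi_i$ is routine by the $\{-1,0,1\}$-difference computation sketched above.
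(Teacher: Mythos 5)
Your overall strategy is the paper's, and your treatment of the easy inequality and of the claim $S(\hat x,\hat L)=\supp(\hat x)$ is sound: for a single-coordinate perturbation the coefficient $\br{\chi_i}(T)$ lies in $\{-1,0,1\}$, so integrality of $f$ and $\hat x$ really does make the largest feasible step an integer. The gap is exactly where you suspected it, and neither of the two fixes you sketch works. For the perturbation by $\chi_i+\chi_j$ the coefficient $\br{\chi_i+\chi_j}(T)$ can equal $-2$ (namely when $T_i\in\LLL\setminus\{\hat L_i\}$ and $T_j\in\LLL\setminus\{\hat L_j\}$), so the largest feasible $\al$ equals
$$\min\menge{\tfrac{f(T)-\brr{\hat x}{\hat L}(T)}{-\br{\chi_i+\chi_j}(T)}}{T\in\TTT^n,\ \br{\chi_i+\chi_j}(T)<0},$$
which is a priori only a \emph{half}-integer: a non-tight $T$ with $f(T)-\brr{\hat x}{\hat L}(T)=1$ and coefficient $-2$ blocks $\al=1$ while allowing $\al=\tfrac12$. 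Hence your assertion that the step from $\hat x$ to $\hat x-(\chi_i+\chi_j)$ ``never crosses a constraint it did not already violate'' fails precisely for these coefficient-$(-2)$ constraints, and you cannot conclude $(\hat x-(\chi_i+\chi_j),\hat L)\in\U(f)$. Your other suggested route (that the obstruction is a \emph{tight} $T$ with $\br{\chi_i+\chi_j}(T)<0$) also misses the point: the obstruction to the integer step is a constraint at distance $1$, not a tight one.

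The paper's proof accepts the half-integral breakpoint and does genuine extra work there. Assuming $N((\hat x,\hat L),i)_j=\hat L_j$, it deduces from \lemref{ij}, integrality and minimality that $(\hat{x}-\tfrac{1}{2}(\chi_i + \chi_j),\hat{L}) \in U(f)$, and that minimality forces a blocking $T$ with $T_i=\ol{\hat L}_i$, $T_j=\ol{\hat L}_j$ and $f(T)=\brr{\hat x}{\hat L}(T)+1$. It then takes a tight $S\in\FFF(\hat x,\hat L)$ with $S_i=\ol{\hat L}_i$ (available since $i\in S(\hat x,\hat L)$; the half-integral membership forces $S_j=\hat L_j$) and applies $k$-submodularity of $f$ together with $k$-supermodularity of $\brr{\hat x}{\hat L}$ to the pair $S,T$: since $(S\sqcap T)_i=(S\sqcup T)_i=\ol{\hat L}_i$ and $(S\sqcap T)_j=(S\sqcup T)_j=\oo$, the half-integral membership plus integrality give $f(S\sqcap T)\ge\brr{\hat x}{\hat L}(S\sqcap T)+1$ and $f(S\sqcup T)\ge\brr{\hat x}{\hat L}(S\sqcup T)+1$, contradicting $f(S\sqcap T)+f(S\sqcup T)\le\brr{\hat x}{\hat L}(S\sqcap T)+\brr{\hat x}{\hat L}(S\sqcup T)+1$. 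This submodularity step is the real content of the lemma (it is why the paper warns one must ``be a bit more careful with the minimality arguments''); without it your argument is incomplete.
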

\begin{proof}
The inequality
$$\min_{T \in \TTT^n}f(T) \geq  \max_{(x,L) \in \U(f)} - \|x\|$$
follows from \thmref{main}.
Let $(\hat{x},\hat{L}) \in \U(f)$ with $\displaystyle \|\hat{x}\|= \min_{(x,L) \in \U(f)} \|x\|$.

By \lemref{leq} it is sufficient to show that
one has $S(\hat{x},\hat{L}) = \supp(\hat{x})$
and the operation $\sqcup$ is associative on $\menge{N((\hat{x},\hat{L}),i)}{i \in S(\hat{x},\hat{L})}$.
By minimality of $(\hat{x},\hat{L})$, for all $i \in \supp(\hat{x})$ one has
$$ \forall\ \al > 0\ \ (\hat{x}-\al\chi_i,\hat{L}) \notin \U(f).$$
This yields
\begin{equation*}
\forall\ \al > 0\ \ (\hat{x}-\al\chi_i,\hat{L}) \notin U(f), 
\end{equation*}
as the existence of an $\al > 0$ such that $(\hat{x}-\al\chi_i,\hat{L}) \in U(f)$ would yield by the integrality of $\hat{x}$ and $f$ the existence of an integer such $\al > 0$, which would mean $(\hat{x}-\al\chi_i,\hat{L}) \in \U(f)$.
As in \thmref{main} this yields $i \in S(\hat{x},\hat{L})$. 

To prove the associativity, as in \thmref{main}, we show that for all $i, j \in S(\hat{x},\hat{L})$ and $m \in [n]$ 
the $m$-th coordinates
$N((\hat{x},\hat{L}),i)_m$ and
$N((\hat{x},\hat{L}),j)_m$ cannot be distinct leaves. 

To prove that we cannot have $N((\hat{x},\hat{L}),i)_j = \hat{L}_j$, we have to do a bit more work than in \thmref{main}.
Minimality of $(\hat{x},\hat{L})$ gives us only the non-existence of an integer $\al > 0$ as in \lemref{ij}.
So if we assume $N((\hat{x},\hat{L}),i)_j = \hat{L}_j$, \lemref{ij} gives us by the integrality of $\hat{x}$ and $f$ and the minimality of $(\hat{x},\hat{L})$ that
\begin{equation}\label{eq:alij}
(\hat{x}-\tfrac{1}{2}(\chi_i + \chi_j),\hat{L}) \in U(f). 
\end{equation}
As $i \in S(\hat{x},\hat{L})$, there is a $S \in \FFF(\hat{x},\hat{L})$ with $S_i = \ol{\hat{L}}_i$. By \eq{alij} the only possibility for coordinate $j$ is $S_j = {\hat{L}}_j$.

Let $T \in \TTT^n$ with $T_i  = \ol{\hat{L}}_i$ and  $T_j  = \ol{\hat{L}}_j$ 
By \eq{alij} one has
 $\brr{\hat{x}}{\hat{L}}(T) + 1 \leq f(T)$,
which yields by minimality
 $\brr{\hat{x}}{\hat{L}}(T) + 1 = f(T)$. We thus have
$$f(S) + f(T)
 = \brr{\hat{x}}{\hat{L}}(S) + \brr{\hat{x}}{\hat{L}}(T) +1,$$
which yields by the submodularity of $f$ and the supermodularity of $\brr{\hat{x}}{\hat{L}}$
$$f(S \sqcap T) + f(S \sqcup T) \leq 
 \brr{\hat{x}}{\hat{L}}(S \sqcap T) + \brr{\hat{x}}{\hat{L}}(S \sqcup T) +1 .$$

We have $(S \sqcap T)_i = (S \sqcup T)_i = \ol{\hat{L}}_i$ and $(S \sqcap T)_j=(S \sqcup T)_j=\oo$ and so by \eq{alij} and integrality $\brr{\hat{x}}{\hat{L}}(S \sqcap T) + 1\leq f(S \sqcap T)$ and $\brr{\hat{x}}{\hat{L}}(S \sqcup T) + 1\leq f(S \sqcup T)$, a contradiction.

So we cannot have $N((\hat{x},\hat{L}),i)_j = \hat{L}_j$ and thus have
$N((\hat{x},\hat{L}),i)_j \leq \ol{\hat{L}}_j.$
If $N((\hat{x},\hat{L}),j)_m \in  \LLL$, by \lemref{compliant1} one has
$N((\hat{x},\hat{L}),i)_m \leq N((\hat{x},\hat{L}),j)_m.$
\end{proof}

\section{The $k$-Submodular Polyhedron}\label{sec:picture}
In this section we will generalize several notions from \secref{proof} to a higher-dimensional space in order to define a \emph{$k$-submodular polyhedron} $P(f)$, in ana\-logy to the polyhedra defined in the ordinary submodular case, see \cite{Edmonds}, and the bisubmodular case, see \cite{CunGreen,FujiMinMax}. We show how $U(f)$ can be embedded in $P(f)$ and investigate the properties of the polyhedron.

For any $x \in \RR^{n \times \LLL}$, we write $x = (x_{i\ell})_{i \in [n],\ \ell \in \LLL}$, and also $x_i = (x_{i\ell})_{\ell \in \LLL}$ for every $i \in [n]$. We define $\x : \TTT^n \ra \RR$ as follows. For every $i \in [n]$, let $\x_i : \TTT \ra \RR $ be defined through $\x_i(\oo) := 0$, and $\x_i(\ell) := x_{i\ell}$ for $\ell \in \LLL$. For every $T \in \TTT^n$ let

$$\x(T) := \sum_{i=1}^{n}\x_i(T_i).$$

For any $k$-submodular function $f: \TTT^n \ra \RR$ with $f(\0) = 0$, we define the polyhedron
\begin{align*}
P(f) := \big\{x \in \RR^{n \times \LLL}\ \big|\ &\forall\  T \in \TTT^n\ \ \ \x(T) \leq f(T)\ \ \ \ \mbox{and}\\ &\forall\ i \in [n] \ \ \forall\  \{\ell,p\} \in {\LLL \choose 2}\  \ \ x_{i\ell}+x_{ip} \leq 0\big\}. 
\end{align*}
For $k=1$ this is exactly the definition of a submodular polyhedron as in \cite{Edmonds}, and for $k=2$ this is 
a superset of the usual bisubmodular polyhedron
as introduced in
\cite{DunstanWelsh},
see also
\cite{BouchetCunningham,CunGreen,FujiMinMax}. If we write $\LLL = \{\ell,p\}$ we have
$$P(f) = \big\{x \in \RR^{n \times \LLL}\ \big|\ \forall\  T \in \TTT^n\ \ \ \x(T) \leq f(T)\ \ \mbox{and}\ \ \forall\ i \in [n]\  \ \ x_{i\ell}+x_{ip} \leq 0\big\}$$
and the usual bisubmodular polyhedron can be written as
\begin{equation}\label{eq:bisub}
 \big\{x \in \RR^{n \times \LLL}\ \big|\ \forall\  T \in \TTT^n\ \ \ \x(T) \leq f(T)\ \ \mbox{and}\ \ \forall\ i \in [n]\  \ \ x_{i\ell}+x_{ip} = 0\big\}.
\end{equation}

We now show how $U(f)$ can essentially be defined as a subset of $P(f)$. For that we need the notion of unified vectors, inspired by \cite{Kuivinen}.
\begin{definition}
For $k\geq2$ a vector $y \in \RR^{\LLL}$ is called \emph{unified}, if there exists a $\ell \in \LLL$ such that for all $p \in \LLL \setminus \{\ell\}$ one has $-y_p=y_\ell\geq 0$ and a vector $x \in \RR^{n \times \LLL}$ is called \emph{unified} if for all $i \in [n]$ the vector $x_i \in \RR^{\LLL}$ is unified.\footnote{As in \secref{proof} we can extend our notions to $k=1$ by calling a one-dimensional vector \emph{unified}, if it is in $\RR_{\leq 0}$.}
\end{definition}
For any $k$-submodular function $f: \TTT^n \ra \RR$ with $f(\0) = 0$, we define 
$$U(f) := \menge{x \in P(f)}{x \ \ \ \mbox{is unified}}.$$
This is a very natural embedding of $U(f)$ from \secref{properties} in the polyhedron $P(f)$.
In particular, for $k=2$ the set $U(f)$ is the usual bisubmodular polyhedron as in \eq{bisub}. For $k\geq 3$ the set $U(f)$ is not necessarily a polyhedron anymore. The subset of unified vectors of a similar polyhedron play an important role in the tractability result in \cite{Kuivinen}.

With this notation, our main result, \thmref{main}, reads
\begin{equation}\label{eq:Main2}
 \min_{T \in \TTT^n}f(T) = \max_{x \in U(f)} - \|x\|,
\end{equation}
where $\|x\|$ for $x \in U(f)$ is defined as in \secref{minmax} accordingly for the embedding. That is, for any $\ell_1, \dots, \ell_n \in \LLL$ we can write
$$\|x\| := \sum\limits_{i=1}^{n}|x_{i\ell_i}|.$$

Recently, in~\cite{FujishigeTanigawa}, a weaker version of \thmref{main} was obtained. The authors consider a different $k$-submodular polyhedron, which we will call $P_{FT}(f)$ to distinguish it from our $k$-submodular polyhedron $P(f)$. It is, in our notation, defined as
$$
P_{FT}(f) := \big\{x \in \RR^{n \times \LLL}\ \big|\ \forall\  T \in \TTT^n\ \ \ \x(T) \leq f(T)\big\}.
$$
\cite{FujishigeTanigawa} also introduces a different norm which is defined for $x \in \RR^{n \times \LLL}$ as
$$ \|x\|_{1,\infty} := \sum_{i=1}^{n}\max_{\ell_i \in \LLL}|x_{i\ell_i}|.$$
Theorem 3.1 in~\cite{FujishigeTanigawa} states that
$$\min_{T \in \TTT^n}f(T) = \max_{x \in P_{FT}(f)} - \|x\|_{1,\infty}.$$
We will show how this follows from \eqref{eq:Main2}. The inequality
$$\min_{T \in \TTT^n}f(T) \ge \max_{x \in P_{FT}(f)} - \|x\|_{1,\infty}$$
follows directly from the definitions. To prove
\begin{equation}\label{eq:FT}
\min_{T \in \TTT^n}f(T) \le \max_{x \in P_{FT}(f)} - \|x\|_{1,\infty}
\end{equation}
note that directly from the definitions we have
$$U(f)\subseteq P(f) \subseteq P_{FT}(f),$$
and
$$\|x\| = \|x\|_{1,\infty} \mbox{\ \ \ \ for all\  \ } x \in U(f).$$
It thus follows that
$$\min_{T \in \TTT^n}f(T) \stackrel{\eqref{eq:Main2}}{=} \max_{x \in U(f)} - \|x\| = \max_{x \in U(f)} - \|x\|_{1,\infty} \le \max_{x \in P_{FT}(f)} - \|x\|_{1,\infty}.$$
Inequality \eqref{eq:FT} and therefore Theorem 3.1 in~\cite{FujishigeTanigawa} follow.

In the remainder of the section, we collect some properties of $P(f)$.
\propref{supp} can be generalized to
\begin{prop}
For every $x \in P(f)$ the function $\x$ is $k$-supermodular.
\end{prop}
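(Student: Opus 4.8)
The plan is to reduce to the case $n=1$ and then check the single-coordinate inequality directly, exactly paralleling the proof of \propref{supp}. First I would observe that, since $\x(T) = \sum_{i=1}^n \x_i(T_i)$ is a sum over coordinates and $k$-supermodularity is preserved under sums, it suffices to prove that for each fixed $i$ the function $\x_i : \TTT \to \RR$ is $k$-supermodular, i.e.\ that $\x_i(s \sqcap t) + \x_i(s \sqcup t) \ge \x_i(s) + \x_i(t)$ for all $s,t \in \TTT$. The only data about $x$ that this uses is the pair of constraints built into $P(f)$: the constraint $\x(T) \le f(T)$ is irrelevant for $k$-supermodularity of $\x$ itself, but the constraints $x_{i\ell} + x_{ip} \le 0$ for all $\{\ell,p\} \in \binom{\LLL}{2}$ are exactly what is needed.

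Next I would dispatch the case analysis on $\{s,t\}$. If $s = t$ the inequality is an equality by idempotency of $\sqcap$ and $\sqcup$. If one of them is the root $\oo$, say $t = \oo$, then $s \sqcap \oo = \oo$ and $s \sqcup \oo = s$, so both sides equal $\x_i(s) + 0$. The only remaining case is $s,t$ two distinct leaves $\ell \ne p$: then $\ell \sqcap p = \oo = \ell \sqcup p$, so the left-hand side is $\x_i(\oo) + \x_i(\oo) = 0$ while the right-hand side is $x_{i\ell} + x_{ip}$, and the required inequality $0 \ge x_{i\ell} + x_{ip}$ is precisely the defining constraint of $P(f)$ for the pair $\{\ell,p\}$. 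This completes the check for every pair $s,t$, hence $\x_i$ is $k$-supermodular for each $i$, hence so is $\x$.

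There is essentially no obstacle here: the statement is a routine strengthening of \propref{supp}, where previously the $L$-structure forced $\bx_i(\ell) = -x_i$ on all non-$L_i$ leaves and thus automatically gave $\bx_i(\ell) + \bx_i(p) = -2x_i \le 0$, whereas now one quotes the explicit pairwise constraints in the definition of $P(f)$ to get the same conclusion. If anything needs care it is only bookkeeping: making sure the case where $s$ or $t$ equals $\oo$ is handled symmetrically (using both $\oo \sqcap a = \oo = a \sqcap \oo$ and $\oo \sqcup a = a = a \sqcup \oo$), and noting that for $k=2$ the single constraint $x_{i\ell}+x_{ip}\le 0$ suffices. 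I would therefore present the proof in one short paragraph: reduce to one coordinate, then three lines of case analysis, with the leaf--leaf case invoking the inequality $x_{i\ell}+x_{ip}\le 0$.

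\begin{proof}
As in \propref{supp}, since $\x(T) = \sum_{i=1}^{n}\x_i(T_i)$ and $k$-supermodularity is preserved under taking sums, it suffices to show that $\x_i:\TTT\ra\RR$ is $k$-supermodular for every $i\in[n]$, i.e.\ that $\x_i(s\sqcap t)+\x_i(s\sqcup t)\ge \x_i(s)+\x_i(t)$ for all $s,t\in\TTT$. If $s=t$ this holds with equality by idempotency of $\sqcap$ and $\sqcup$. If $t=\oo$ (or symmetrically $s=\oo$), then $s\sqcap\oo=\oo$ and $s\sqcup\oo=s$, so both sides equal $\x_i(s)$ since $\x_i(\oo)=0$. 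Finally, if $s=\ell$ and $t=p$ are distinct leaves, then $\ell\sqcap p=\oo=\ell\sqcup p$, so the left-hand side is $2\,\x_i(\oo)=0$ while the right-hand side is $x_{i\ell}+x_{ip}$, and the inequality $0\ge x_{i\ell}+x_{ip}$ holds because $x\in P(f)$. Hence each $\x_i$ is $k$-supermodular, and therefore so is $\x$.
\end{proof}
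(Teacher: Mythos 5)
Your proof is correct and follows the same route as the paper: reduce to a single coordinate and observe that $k$-supermodularity of $\x_i$ comes precisely from the pairwise constraints $x_{i\ell}+x_{ip}\le 0$ in the definition of $P(f)$. The paper compresses your case analysis into the phrase ``by definition of $P(f)$,'' but the content is identical; your only-nontrivial-case observation (two distinct leaves) is exactly the point.
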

\begin{proof}
For every $i \in [n]$ the function $\x_i$ is $k$-supermodular by definition of $P(f)$, and this carries over to the sum.
\end{proof}

As in \secref{properties} we define for every $x \in P(f)$ the set
$$\FFF(x) := \menge{T \in \TTT^n }{\x(T) = f(T)}$$
of \emph{$x$-tight} elements and have
\begin{prop}\label{prop:Closed}
The set $\FFF(x)$ is closed under the operations $\sqcap$ and $\sqcup$, and the function $\x\big|_{\FFF(x)} = f\big|_{\FFF(x)}$ is $k$-modular.
\proofbox
\end{prop}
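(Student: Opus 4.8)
The plan is to copy the argument of \propref{closed} almost verbatim, replacing $\bx$ by $\x$ and $U(f)$ by $P(f)$, and using in place of \propref{supp} the proposition just established, which guarantees that $\x$ is $k$-supermodular for every $x \in P(f)$. (The box constraints $x_{i\ell}+x_{ip}\le 0$ in the definition of $P(f)$ feed into the proof only through that $k$-supermodularity; they are not invoked directly in what follows.)

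Concretely, I would fix $x \in P(f)$ and $T, U \in \FFF(x)$. The first family of defining inequalities of $P(f)$ gives $\x(T \sqcap U) \le f(T \sqcap U)$ and $\x(T \sqcup U) \le f(T \sqcup U)$. Combining the $k$-submodularity of $f$ with the $k$-supermodularity of $\x$ then yields the chain
$f(T) + f(U) = \x(T) + \x(U) \le \x(T \sqcap U) + \x(T \sqcup U) \le f(T \sqcap U) + f(T \sqcup U) \le f(T) + f(U)$,
in which the leading equality uses $T, U \in \FFF(x)$.

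Since the two extremes of this chain agree, every inequality in it is in fact an equality. Equality in the last step, read summand by summand, forces $\x(T \sqcap U) = f(T \sqcap U)$ and $\x(T \sqcup U) = f(T \sqcup U)$, so $T \sqcap U, T \sqcup U \in \FFF(x)$ and $\FFF(x)$ is closed under $\sqcap$ and $\sqcup$. Equality in the middle step reads $\x(T \sqcap U) + \x(T \sqcup U) = \x(T) + \x(U)$, which is precisely the $k$-modularity of $\x\big|_{\FFF(x)}$; since $\x$ and $f$ coincide on $\FFF(x)$ by the definition of $\FFF(x)$, this is simultaneously the $k$-modularity of $f\big|_{\FFF(x)}$. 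I anticipate no real obstacle here: the statement is a routine lift of \propref{closed} to the higher-dimensional setting, and its only genuinely new ingredient, the $k$-supermodularity of $\x$, has just been supplied.
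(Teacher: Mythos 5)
Your proposal is correct and is exactly what the paper intends: \propref{Closed} is stated with an empty proof box precisely because the argument is a verbatim transcription of the proof of \propref{closed}, with $\bx$ replaced by $\x$ and the $k$-supermodularity of $\x$ supplied by the preceding proposition (into which, as you rightly note, the constraints $x_{i\ell}+x_{ip}\le 0$ enter). Nothing further is needed.
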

We define for every $x \in P(f)$ the set
$$\GGG(x) := \menge{(i,\{p,q\}) \in [n] \times {\LLL \choose 2}}{x_{ip}+x_{iq} = 0}.$$
In the following, we will investigate the vertices of $P(f)$.
An element $x \in P(f)$ is a vertex of $ P(f)$ if and only if it is the unique solution to the set of equations
\begin{eqnarray*}
\forall\ T \in \FFF(x)\ \ \ \ \ \ \ \ \ \ \x(T)&=& f(T)\\
\forall\ (i,\{p,q\}) \in \GGG(x)\ \ \ x_{ip}+x_{iq} &=&0
\end{eqnarray*}
A set $\BBB \subseteq \FFF(x) \cup \GGG(x)$ is called \emph{basis} for $x$ if $|\BBB|=kn$ and $x$ is the unique solution to the set of equations
\begin{eqnarray}
 \forall\ T \in  \BBB_1 := \BBB \cap \FFF(x)\ \ \ \ \ \ \ \ \ \ \ \x(T) &=& f(T)\label{bbb}\\
\forall\ (i,\{p,q\}) \in  \BBB_2 := \BBB \cap \GGG(x) \  \ \ \ x_{ip}+x_{iq} &=&0\nonumber
\end{eqnarray}

\begin{rem}\label{rem:1}
 If $k \leq 2$ we have $|\BBB_2|\leq (k-1)n$ and thus $|\BBB_1|\geq n$.
\end{rem}

For the sake of completeness we state the proof of the next lemma, which was Lemma 5 in \cite{ISCO}. We do not use it any further in this paper though.

\begin{lem}\label{lem:exchange}
For every $x \in P(f)$, every basis $\BBB$ for $x$, and all $S,T \in  \BBB_1$, there is a $\g \in \{T \sqcap S, T \sqcup S\} \cup \menge{(i,\{S_i,T_i\})}{i \in [n], \mbox{ $S_i$ and $T_i$ are different leaves}}$ such that replacing $T$ with $\g$ in $\BBB$ gives a basis for $x$, \ie $\big(\BBB \setminus \{T\}\big) \cup \{\g\}$ is a basis for $x$.
\end{lem}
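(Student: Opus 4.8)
The plan is to exploit the fact that both $S,T\in\BBB_1$ are $x$-tight, so by \propref{Closed} we know $S\sqcap T$ and $S\sqcup T$ are also $x$-tight and that $\x(S)+\x(T)=\x(S\sqcap T)+\x(S\sqcup T)$ with equality to $f$ throughout. The key observation is the ``coordinatewise'' refinement: since $\x$ is a separable sum $\x=\sum_i\x_i$ and each $\x_i$ is $k$-supermodular, tightness of $S$ and $T$ forces $\x_i(S_i)+\x_i(T_i)=\x_i(S_i\sqcap T_i)+\x_i(S_i\sqcup T_i)$ for \emph{every} coordinate $i$ separately. Examine the three cases for $\{S_i,T_i\}$: if $S_i=T_i$ nothing happens; if one of them is $\oo$ then $S_i\sqcap T_i$ and $S_i\sqcup T_i$ are just $\{S_i,T_i\}$ again; and if $S_i,T_i$ are two distinct leaves then $S_i\sqcap T_i=S_i\sqcup T_i=\oo$, and tightness gives $x_{iS_i}+x_{iT_i}=\x_i(S_i)+\x_i(T_i)=2\x_i(\oo)=0$, so $(i,\{S_i,T_i\})\in\GGG(x)$. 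Thus every candidate $\g$ in the stated set actually lies in $\FFF(x)\cup\GGG(x)$, which is what is needed for $\big(\BBB\setminus\{T\}\big)\cup\{\g\}$ to even be a legitimate subset of $\FFF(x)\cup\GGG(x)$.

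Next I would set up the linear-algebra framework. Identify each element $T\in\FFF(x)$ with the linear functional $x\mapsto\x(T)$ on $\RR^{n\times\LLL}$, \ie with the vector $\sum_i e_{iT_i}$ (where $e_{i\oo}=0$), and each pair $(i,\{p,q\})\in\GGG(x)$ with the functional $e_{ip}+e_{iq}$. That $\BBB$ is a basis for $x$ means the $kn$ corresponding vectors span $\RR^{n\times\LLL}$ (equivalently, are linearly independent, since there are exactly $kn=\dim\RR^{n\times\LLL}$ of them). Write $v_S,v_T$ for the vectors of $S,T$, and let $W$ be the span of the vectors associated to $\BBB\setminus\{T\}$; then $\dim W=kn-1$ and $v_T\notin W$. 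I want to show that at least one of the candidate vectors $\g$ — namely $v_{S\sqcap T}$, $v_{S\sqcup T}$, or the $e_{iS_i}+e_{iT_i}$ for $i$ with $S_i\ne T_i$ both leaves — also lies outside $W$; then swapping it in keeps the span full, hence keeps $\BBB$ a basis.

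The main identity driving this is the coordinatewise decomposition $v_S+v_T=v_{S\sqcap T}+v_{S\sqcup T}+\sum_{i:\,S_i\ne T_i\text{ both leaves}}\big(e_{iS_i}+e_{iT_i}\big)$, which one checks coordinate by coordinate using the three cases above (in the ``both distinct leaves'' case $S_i\sqcap T_i=S_i\sqcup T_i=\oo$ contributes $0$ on the left's partners, and $e_{iS_i}+e_{iT_i}$ supplies the missing term). Since $v_S\in W$ and $v_T\notin W$, the left-hand side is not in $W$; therefore at least one summand on the right is not in $W$. That summand is the desired $\g$, and replacing $T$ by it yields $kn$ vectors still spanning $\RR^{n\times\LLL}$, hence a basis for $x$. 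The main obstacle — really the only delicate point — is getting the coordinatewise vector identity exactly right, in particular being careful that $e_{i\oo}=0$ so that a coordinate where, say, $S_i$ is a leaf and $T_i=\oo$ contributes $e_{iS_i}$ to both sides and nothing spurious, and that the ``distinct leaves'' coordinates are precisely the ones where $v_{S\sqcap T}+v_{S\sqcup T}$ falls short of $v_S+v_T$ by $e_{iS_i}+e_{iT_i}$; once that bookkeeping is pinned down, the rest is a one-line dimension argument.
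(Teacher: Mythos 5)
Your proof is correct and follows essentially the same route as the paper's: the coordinatewise identity $v_S+v_T=v_{S\sqcap T}+v_{S\sqcup T}+\sum_i(e_{iS_i}+e_{iT_i})$ over the distinct-leaf coordinates, together with \propref{Closed} guaranteeing that all candidates lie in $\FFF(x)\cup\GGG(x)$, is exactly the equation-replacement step the paper performs, and your span/dimension argument just makes explicit the paper's ``by linear algebra, one of them is enough.''
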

\begin{proof}
Without changing the solution, we can do the following changes to \eqref{bbb}. 

By linear algebra, we can replace the equation $\x(T) = f(T)$ with
\begin{equation}\label{eq:+}
 \x(T)+ \x(S) = f(T)+f(S).
\end{equation}
By \propref{Closed} and the definition of $P(f)$ we have $T \sqcap S, T \sqcup S \in \FFF(x)$ and $(i,\{S_i,T_i\}) \in \GGG(x)$ for every $i \in [n]$ for which $S_i$ and $T_i$ are different leaves. We thus can replace \eq{+} by the equations
\begin{eqnarray*}
\x(T \sqcap S) &=& f(T \sqcap S)\\
\x(T \sqcup S) &=& f(T \sqcup S)\ \ \mbox{and}\\
x_{iS_i}+x_{iT_i} &=&0\ \ \ \mbox{for all $i \in [n]$ such that $S_i$ and $T_i$ are different leaves.}
\end{eqnarray*}
By linear algebra, one of them is enough.
\end{proof}

\section{Relation to Multimatroids}\label{sec:multimatroids}
In this section we discuss the relation between $k$-submodular functions and {\em multimatroids} introduced by Bouchet~\cite{Bouchet:I,Bouchet:II,Bouchet:III}.
First, we recall some definitions from~\cite{Bouchet:I,Bouchet:II,Bouchet:III} (adapted to our notation). 
We use the same definitions of sets $\TTT$, $\LLL=\TTT\setminus\{\oo\}$ and operations $\sqcup,\sqcap:\TTT\times\TTT\rightarrow\TTT$ as before.
We say that $T,U\in \TTT^n$ are 
\begin{itemize} 
\item {\em compatible} if $|\{T_i,U_i\}\setminus\{\oo\}|\le 1$ for all $i\in[n]$;
\item {\em $\bar i$-similar} for $i\in[n]$ if $T_j=U_j$ for all $j\in[n]\setminus \{i\}$.
\end{itemize}
\begin{definition}
A function $r:\TTT^n\rightarrow\mathbb Z_{\ge 0}$ is called a {\em rank function of a $k$-matroid} if it satisfies
\begin{enumerate}
\item $r({\bf 0})=0$.
\item If $T,U\in\TTT^n$ are $\bar i$-similar and $T_i=\oo$ then 
\begin{equation}
r(T)\le r(U) \le r(T)+1.
\end{equation}
\item If labelings $T,U\in\TTT^n$ are compatible then
\begin{equation}
r(T\sqcap U)+r(T\sqcup U)\le r(T)+r(U).
\end{equation}
\item If $T,U\in\TTT^n$ are $\bar i$-similar and $|\{T_i,U_i\}\setminus\{\oo\}|=2$ then
\begin{equation}
r(T\sqcap U)+r(T\sqcup U)\le r(T)+r(U)-1.
\end{equation}
\end{enumerate}
\end{definition}
Multimatroids are just a slight generaralization of $k$-matroids: the number of leaves is allowed to be different for
different $i\in[n]$.
We claim that a rank function of a $k$-matroid is a $k$-submodular function; this follows from
\begin{prop}
Function $f:\TTT^n\rightarrow \mathbb R$ is $k$-submodular if and only if it satisfies
\begin{eqnarray}
f(T\sqcap U)+f(T\sqcup U)\le r(T)+r(U) \label{eq:LIAHGAKJJFHAK}
\end{eqnarray}
(i) for all compatible $T,U\in\TTT^n$; and\\(ii) for all $\bar i$-similar $T,U\in\TTT^n$ with $|\{T_i,U_i\}\setminus\{\oo\}|=2$.
\end{prop}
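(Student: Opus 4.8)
The ``only if'' direction is immediate, since $k$-submodularity asks for the inequality on \emph{all} pairs $T,U\in\TTT^n$. So the content is the ``if'' direction: assuming the displayed inequality holds for all compatible pairs and for all $\bar i$-similar pairs with $|\{T_i,U_i\}\setminus\{\oo\}|=2$, I want to deduce it for arbitrary $T,U$. The plan is induction on the number
$$d(T,U):=\bigl|\{\,i\in[n]\;:\;T_i\text{ and }U_i\text{ are distinct leaves}\,\}\bigr|$$
of coordinates at which $T$ and $U$ ``clash''. When $d(T,U)=0$ the pair is compatible, so the desired inequality is exactly hypothesis (i); this is the base case.

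It is convenient to isolate two consequences of the hypotheses; throughout, for $W\in\TTT^n$, $i\in[n]$ and $t\in\TTT$ let $W^{i:t}$ denote $W$ with its $i$-th coordinate replaced by $t$. \textbf{(a) A diminishing-returns inequality:} if $W\le V$ for the componentwise order on $\TTT^n$, $W_i=V_i=\oo$, and $\ell\in\LLL$, then $f(W^{i:\ell})-f(W)\ge f(V^{i:\ell})-f(V)$. Indeed, $W^{i:\ell}$ and $V$ are compatible (at $i$ one entry is $\oo$; at $m\ne i$ we have $W_m\le V_m$, so one of $W_m,V_m$ is $\oo$ or they coincide), while $W^{i:\ell}\sqcap V=W$ and $W^{i:\ell}\sqcup V=V^{i:\ell}$, so (i) applies. \textbf{(b) An exchange inequality:} if $W_i=\oo$ and $\ell\ne\ell'$ are leaves, then $f(W^{i:\ell})+f(W^{i:\ell'})\ge 2f(W)$; here $W^{i:\ell},W^{i:\ell'}$ is $\bar i$-similar with two distinct leaves in coordinate $i$, and both its $\sqcap$ and its $\sqcup$ equal $W$, so this is (ii).

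For the inductive step, suppose $d(T,U)=m\ge1$ and pick a clashing coordinate $i$ with $T_i=a$, $U_i=b$, $a\ne b$. Set $T^{\circ}:=T^{i:\oo}$. Then $d(T^{\circ},U)=m-1$, and a coordinatewise check gives $T^{\circ}\sqcap U=T\sqcap U$ and $T^{\circ}\sqcup U=(T\sqcup U)^{i:b}$ (using $(T\sqcup U)_i=a\sqcup b=\oo$). Applying the induction hypothesis to $(T^{\circ},U)$ and adding $f(T)-f(T^{\circ})$ to both sides, the target $f(T)+f(U)\ge f(T\sqcap U)+f(T\sqcup U)$ reduces to
$$\bigl(f(T)-f(T^{\circ})\bigr)+\bigl(f\bigl((T\sqcup U)^{i:b}\bigr)-f(T\sqcup U)\bigr)\ \ge\ 0,$$
i.e. to the nonnegativity of the sum of the marginals of raising coordinate $i$ to $a$ at $T^{\circ}$ and to $b$ at $T\sqcup U$. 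The key point: $T^{\circ}$ and $T\sqcup U$ are compatible (at $i$ both are $\oo$; at $m\ne i$ the entries are $T_m$ and $T_m\sqcup U_m$, which are never two distinct leaves), so $Z:=T^{\circ}\sqcup(T\sqcup U)$ satisfies $Z\ge T^{\circ}$, $Z\ge T\sqcup U$, and $Z_i=\oo$. Applying (a) with the common larger point $Z$ to each marginal, and then (b) at $Z$,
$$\bigl(f(T)-f(T^{\circ})\bigr)+\bigl(f\bigl((T\sqcup U)^{i:b}\bigr)-f(T\sqcup U)\bigr)\ \ge\ \bigl(f(Z^{i:a})-f(Z)\bigr)+\bigl(f(Z^{i:b})-f(Z)\bigr)\ \ge\ 0,$$
which completes the step (here $(T^{\circ})^{i:a}=T$ since $T_i=a$).

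The main obstacle is orienting the inequalities correctly. The naive move --- replacing both $T_i$ and $U_i$ by $\oo$ and invoking (ii) --- leaves a surplus term with the wrong sign, and trying to bound the two marginals by descending to a common \emph{lower} base point makes the diminishing-returns inequality (a) work against us. What makes it go through is comparing the marginals at the common \emph{upper} bound $Z=T^{\circ}\sqcup(T\sqcup U)$, which is legitimate precisely because $T^{\circ}$ and $T\sqcup U$ are compatible and $Z$ keeps coordinate $i$ at $\oo$; (a) then lowers both marginals to $Z$ and (b) closes the gap. Everything else is mechanical: verifying $T^{\circ}\sqcap U=T\sqcap U$, $T^{\circ}\sqcup U=(T\sqcup U)^{i:b}$, $d(T^{\circ},U)=m-1$, and the compatibility claims used in (a) and in the definition of $Z$.
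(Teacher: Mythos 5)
Your proof is correct, and it takes a genuinely different route from the paper. The paper disposes of the ``if'' direction in two lines by restricting $f$, coordinatewise, to the at most two leaves appearing in $\{T_i,U_i\}$, which turns the problem into the bisubmodular case $k=2$ and lets it invoke the known characterization of Ando, Fujishige and Naitoh for that case. You instead give a self-contained induction on the number of coordinates where $T$ and $U$ carry distinct leaves, driven by two correctly derived consequences of the hypotheses: the diminishing-returns inequality (a), obtained from (i) via the identities $W^{i:\ell}\sqcap V=W$ and $W^{i:\ell}\sqcup V=V^{i:\ell}$, and the exchange inequality (b), which is (ii) in disguise. The one delicate point --- that comparing the two marginals at the common \emph{upper} bound $Z=T^{\circ}\sqcup(T\sqcup U)$ is legitimate because $T^{\circ}$ and $T\sqcup U$ are compatible, so that $Z$ really dominates both and keeps coordinate $i$ at $\oo$ --- is exactly where a naive attempt would break, and you identify and handle it correctly. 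What the paper's argument buys is brevity at the cost of an external citation (and it implicitly reproves nothing); what yours buys is a proof from first principles that in particular reproves the $k=2$ characterization itself, which is arguably preferable in a self-contained exposition.
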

\begin{proof}
The ``only if'' direction is trivial; let us consider the ``if'' part.
It is trivial for $k=1$, and for the case $k=2$ it was shown in~\cite{Ando:96}.
Suppose that $k\ge 3$. To show that~\eqref{eq:LIAHGAKJJFHAK} holds for arbitrary 
$T,U\in\TTT^n$, we can just restrict $f$ to a bisubmodular function
that uses leaves present in $T$ and $U$ and then apply the characterization of~\cite{Ando:96}
for the case $k=2$.
\end{proof}

\section{Discussion}\label{sec:disc}

In the submodular and bisubmodular cases, the Min-Max-Theorem led to polynomial time minimization algorithms.
In the case of the simple but non distributive lattice class called diamonds~\cite{Kuivinen}, it led to a pseudo-polynomial algorithm whose complexity depends
polynomially on the value of the function. 
Can we use the Min-Max-Theorem to design (pseudo-) polynomial algorithms for $k$-submodular functions?

Unfortunately, we still miss one important piece for designing such algorithms.
Name\-ly, we do not know at the moment whether the polyhedron $P(f)$ is \emph{well-characterized},
i.e.\ whether for each vector $x\in P(f)$ there is a certificate of the fact that $x\in P(f)$
which can be checked in polynomial time. Note, it suffices to have such certificates for vertices of $P(f)$,
since any vector $x\in P(f)$ can be represented as a convex combination of polynomially many vertices.

It is known that the good characterization property holds for the cases of submodular functions~\cite{FujiBook,Schrijver}, bisubmodular functions~\cite{Qi},
diamonds and modular lattices~\cite{Kuivinen}. 

To summarize, the existence of a polynomial time algorithm for $k$-submodu\-lar functions remains an open question, despite the Min-Max-Theorem.

\section*{Acknowledgements}
We would like to thank Andrei Krokhin for encouraging our cooperation, for helpful discussions, and for his critical reading of the manuscript.
We also thank Satoru Fujishige for pointing out the work of Bouchet on multimatroids~\cite{Bouchet:I,Bouchet:II,Bouchet:III} to us and Satoru Fujishige and Shin-ichi Tanigawa for finding a mistake in the preprint of this paper as outlined in the introduction.

\bibliographystyle{alpha}
\bibliography{lit}

\begin{thebibliography}{DJKK08}

\bibitem[AFN96]{Ando:96}
K.~Ando, S.~Fujishige, and T.~Naitoh.
\newblock A characterization of bisubmodular functions.
\newblock {\em Discrete Mathematics}, 148:299--303, 1996.

\bibitem[BC95]{BouchetCunningham}
A.~Bouchet and W.H. Cunningham.
\newblock Delta-matroids, jump systems and bisubmodular polyhedra.
\newblock {\em SIAM J. Discrete Math.}, 8:17--32, 1995.

\bibitem[Bou87]{Bouchet}
A.~Bouchet.
\newblock Greedy algorithm and symmetric matroids.
\newblock {\em Mathematical Programming}, 38:147--159, 1987.

\bibitem[Bou97]{Bouchet:I}
A.~Bouchet.
\newblock Multimatroids {I}. coverings by independent sets.
\newblock {\em SIAM J. Discrete Math.}, 10(4):626--646, 1997.

\bibitem[Bou98]{Bouchet:II}
A.~Bouchet.
\newblock Multimatroids {II}. orthogonality, minors and connectivity.
\newblock {\em Electr. J. Comb.}, 5, 1998.

\bibitem[Bou01]{Bouchet:III}
A.~Bouchet.
\newblock Multimatroids {III}. tightness and fundamental graphs.
\newblock {\em Eur. J. Comb.}, 22(5):657--677, 2001.

\bibitem[CCJ08]{TournamentPairMultimorphisms}
D.~Cohen, M.~Cooper, and P.~Jeavons.
\newblock Generalising submodularity and {H}orn clauses: Tractable optimization
  problems defined by tournament pair multimorphisms.
\newblock {\em Theoretical Computer Science}, 401(1):36--51, 2008.

\bibitem[CCJK06]{softCSP}
D.~Cohen, M.~Cooper, P.~Jeavons, and A.~Krokhin.
\newblock The complexity of soft constraint satisfaction.
\newblock {\em Artificial Intelligence}, 170(11):983--1016, 2006.

\bibitem[CGK91]{CunGreen}
W.H. Cunningham and J.~Green-Kr{\'o}tki.
\newblock b-matching degree-sequence polyhedra.
\newblock {\em Combinatorica}, 11(3):219--230, 1991.

\bibitem[CK88]{ChandrasekaranKabadi}
R.~Chandrasekaran and S.N. Kabadi.
\newblock Pseudomatroids.
\newblock {\em Discrete Math.}, 71:205--217, 1988.

\bibitem[DJKK08]{Deineko}
V.~Deineko, P.~Jonsson, M.~Klasson, and A.~Krokhin.
\newblock The approximability of max {CSP} with fixed-value constraints.
\newblock {\em J. ACM}, 55(4), 2008.

\bibitem[DW73]{DunstanWelsh}
F.D.J. Dunstan and D.J.A. Welsh.
\newblock A greedy algorithm for solving a certain class of linear programmes.
\newblock {\em Mathematical Programming}, 5:338--353, 1973.

\bibitem[Edm70]{Edmonds}
J.~Edmonds.
\newblock Submodular functions, matroids, and certain polyhedra.
\newblock In R.~Guy, H.~Hanani, N.~Sauer, and J.~Sch{\"o}nheim, editors, {\em
  Combinatorial Structures and Their Applications}, pages 69--87. Gordon and
  Breach, 1970.

\bibitem[FI06]{FujiAlg}
S.~Fujishige and S.~Iwata.
\newblock Bisubmodular function minimization.
\newblock {\em Siam J. Discrete Math.}, 19(4):1065--1073, 2006.

\bibitem[Fra93]{Frank}
A.~Frank.
\newblock Applications of submodular functions.
\newblock In K.~Walker, editor, {\em Surveys in Combinatorics}, pages 85--136.
  Cambridge University Press, 1993.

\bibitem[FT13]{FujishigeTanigawa}
S.~Fujishige and S.~Tanigawa.
\newblock A min-max theorem for $k$-submodular functions and extreme points of
  the associated polyhedra.
\newblock Technical Report RIMS-1787, Research Institute for Mathematical
  Sciences, Kyoto University, 2013.

\bibitem[Fuj97]{FujiMinMax}
S.~Fujishige.
\newblock A min-max theorem for bisubmodular polyhedra.
\newblock {\em SIAM J. Discrete Math.}, 10(2):294--308, 1997.

\bibitem[Fuj05]{FujiBook}
S.~Fujishige.
\newblock {\em Submodular Functions and Optimization}, volume~58 of {\em Annals
  of Discrete Mathematics}.
\newblock Elsevier, second edition, 2005.

\bibitem[GK13]{Potts}
I.~Gridchyn and V.~Kolmogorov.
\newblock Potts model, parametric maxflow and k-submodular functions.
\newblock In {\em International Conference on Computer Vision}, 2013.

\bibitem[GLS81]{Ellipsoid}
M.~Gr{\"o}tschel, L.~Lov{\'a}sz, and A.~Schrijver.
\newblock The ellipsoid method and its consequences in combinatorial
  optimization.
\newblock {\em Combinatorica}, 1:169--197, 1981.

\bibitem[HK12]{ISCO}
A.\ Huber and V.\ Kolmogorov.
\newblock Towards minimizing $k$-sub\-modular functions.
\newblock In {\em Proceedings of the 2nd International Symposium on
  Combinatorial Optimization (ISCO)}, pages 451--462, 2012.

\bibitem[IFF01]{IwataFleischerFuji}
S.~Iwata, L.~Fleischer, and S.~Fujishige.
\newblock A combinatorial strongly polynomial algorithm for minimizing
  submodular functions.
\newblock {\em J. ACM}, 48(4):761--777, 2001.

\bibitem[IO09]{IwataOrlin}
S.~Iwata and J.~Orlin.
\newblock A simple combinatorial algorithm for submodular function
  minimization.
\newblock In {\em SODA}, pages 1230--1237, 2009.

\bibitem[Iwa08]{Iwata}
S.~Iwata.
\newblock Submodular function minimization.
\newblock {\em Mathematical Programming}, 112(1):45--64, 2008.

\bibitem[JKT11]{JonssonKuivinenThapper}
P.~Jonsson, F.~Kuivinen, and J.~Thapper.
\newblock Min {CSP} on four elements: Moving beyond submodularity.
\newblock In {\em Principles and Practice of Constraint Programming (CP)},
  pages 438--453, 2011.

\bibitem[KC90]{KabadiChandrasekaran}
S.N. Kabadi and R.~Chandrasekaran.
\newblock On totally dual integral systems.
\newblock {\em Discrete Appl. Math.}, 26:87--104, 1990.

\bibitem[KL08]{Krokhin}
A.~Krokhin and B.~Larose.
\newblock Maximizing supermodular functions on product lattices, with
  application to maximum constraint satisfaction.
\newblock {\em SIAM J. Discrete Math.}, 22(1):312--328, 2008.

\bibitem[Kol11]{Kolmogorov}
V.~Kolmogorov.
\newblock Submodularity on a tree: Unifying ${L}^\natural$-convex and
  bisubmodular functions.
\newblock In {\em 36th International Symposium on Mathematical Foundations of
  Computer Science (MFCS)}, 2011.

\bibitem[Kui11]{Kuivinen}
F.~Kuivinen.
\newblock On the complexity of submodular function minimisation on diamonds.
\newblock {\em Discrete Optimization}, 8(3):459--477, 2011.

\bibitem[KZ12]{KolmogorovZivny}
V.~Kolmogorov and S.~Zivny.
\newblock The complexity of conservative valued {CSP}s.
\newblock In {\em SODA}, 2012.

\bibitem[Lov83]{Lovasz}
L.~Lov{\'a}sz.
\newblock Submodular functions and convexity.
\newblock In A.~Bachem, M.~Gr{\"o}tschel, and B.~Korte, editors, {\em
  Mathematical programming: the state of the art}, pages 235--257. 1983.

\bibitem[McC06]{McCormick}
S.T. McCormick.
\newblock Submodular function minimization.
\newblock In K.~Aardal, G.~Nemhauser, and R.~Weismantel, editors, {\em Handbook
  on Discrete Optimization}, pages 321--391. Elsevier, 2006.

\bibitem[MF10]{McCormickFuji}
S.T. McCormick and S.~Fujishige.
\newblock Strongly polynomial and fully combinatorial algorithms for
  bisubmodular function minimization.
\newblock {\em Mathematical Programming, Ser. A}, 122:87--120, 2010.

\bibitem[Nak98]{Nakamura}
M.~Nakamura.
\newblock A characterization of greedy sets: universal polymatroids ({I}).
\newblock In {\em Scientific Papers of the College of Arts and Sciences},
  volume~38, pages 155--167. 1998.

\bibitem[Orl09]{Orlin}
J.~Orlin.
\newblock A faster strongly polynomial time algorithm for submodular function
  minimization.
\newblock {\em Mathematical Programming}, 118:237--251, 2009.

\bibitem[Qi88]{Qi}
L.~Qi.
\newblock Directed submodularity, ditroids and directed submodular flows.
\newblock {\em Mathematical Programming}, 42:579--599, 1988.

\bibitem[Rag09]{Raghavendra}
P.~Raghavendra.
\newblock {Approximating NP-hard Problems: Efficient Algorithms and their
  Limits}.
\newblock {\em PhD Thesis}, 2009.

\bibitem[Sch00]{SchrijverAlgo}
A.~Schrijver.
\newblock A combinatorial algorithm minimizing submodular functions in
  polynomial time.
\newblock {\em Journal of Combinatorial Theory , Ser.B}, 80:346--355, 2000.

\bibitem[Sch04]{Schrijver}
A.~Schrijver.
\newblock {\em Combinatorial Optimization: Polyhedra and Efficiency}.
\newblock Springer, 2004.

\bibitem[Tak10]{Takhanov}
R.~Takhanov.
\newblock A dichotomy theorem for the general minimum cost homomorphism
  problem.
\newblock In {\em 27th International Symposium on Theoretical Aspects of
  Computer Science (STACS)}, pages 657--668, 2010.

\bibitem[Top78]{Topkis}
D.M. Topkis.
\newblock Minimizing a submodular function on a lattice.
\newblock {\em Operations Research}, 26(2):305--321, 1978.

\bibitem[Top98]{TopkisBook}
D.M. Topkis.
\newblock {\em Supermodularity and complementarity}.
\newblock Princeton University Press, 1998.

\bibitem[T{\v{Z}}12]{ThapperZivny}
J.~{Thapper} and S.~{{\v{Z}}ivn\'y}.
\newblock {The Power of Linear Programming for Valued CSPs}.
\newblock In {\em FOCS}, pages 669--678, 2012.
\newblock available from arXiv:1204.1079.

\end{thebibliography}
\end{document}